\documentclass[11pt,a4paper]{article}

\usepackage[margin=1in]{geometry}
\usepackage[utf8]{inputenc}
\usepackage[T1]{fontenc}

\usepackage{amsthm}
\usepackage{amsmath}
\usepackage{amssymb}
\usepackage{amsfonts}
\usepackage{graphicx}
\usepackage{hyperref}
\usepackage{cite}
\usepackage{microtype}
\usepackage{booktabs}
\usepackage{lmodern}
\usepackage{url}
\usepackage{authblk}
\usepackage{color}

\usepackage{fancyvrb}
\usepackage{fvextra} 

\usepackage{pgfplots}
\pgfplotsset{compat=1.17}

\usetikzlibrary{arrows.meta, positioning, decorations.pathreplacing, calc}

\newtheorem{definition}{Definition}
\newtheorem{lemma}{Lemma}
\newtheorem{problem}{Problem}
\newtheorem{example}{Example}
\newtheorem{theorem}{Theorem}
\newtheorem{remark}{Remark}

\newtheorem{proposition}[theorem]{Proposition} 

\long\def\comment#1{}

\newcommand{\UP}[0]{P_\upsilon}
\newcommand{\UNP}[0]{NP_\upsilon}
\newcommand{\UBQP}[0]{BQP_\upsilon}

\newtheorem{corollary}[theorem]{Corollary}

\begin{document}

\title{On the Computational Complexity of Problems: Formalizing Sensitivity to Uncertainty and Parametric Complexity Classes}

\author[1,2,3]{Yannis Tzitzikas}

\affil[1]{Department of Computer Science, University of Crete, Greece}
\affil[2]{Institute of Computer Science, FORTH, Heraklion, Greece}
\affil[3]{Center for Quantum Science \& Technologies,  FORTH-QuTech, Heraklion, Greece}


\maketitle

\begin{abstract}
Worst-case analysis, the setting of the $P \stackrel{?}{=} NP$ question, measures the cost of a problem as a function of input size alone. Yet problems that are exponential in general often become polynomial, even linear, once a checkable property of the input is known: Partition, for instance, is decidable in linear time on inputs whose elements are all equal. We develop a framework that makes such knowledge an explicit parameter $K$ of complexity statements. Costs are measured for fixed algorithms; we distinguish the promise cost from the guarded cost that also pays for checking $K$, and introduce the residual uncertainty $\Psi(n,K)$, the number of input cells that, together with $K$, determine the input. An Uncertainty Reduction Theorem decomposes the guarded cost into checking $K$, locating and reading the residual information, and solving an explicit reduced problem. The classes $\UP[K]$ and $\UNP[K]$ are restrictions to polynomial-time promises, made informative by the coverage of $K$. We show that small $\Psi$ forces sparsity but not tractability, characterize FPT as uniform membership in graded families $\UP[K_k]$, prove a free-variable threshold for SAT, and show that under the factoring assumption no checkable property of non-negligible coverage makes factoring easy. We relate the framework to promise problems, kernelization, average- and generic-case complexity and beyond-worst-case analysis, illustrate it on Partition, Graph Isomorphism, SAT, Vertex Cover and TSP, and apply it to guarded algorithm dispatch with latency contracts and optimal guard ordering, to solver routing in agentic AI, and to the role of promises in quantum speedups.
\end{abstract}

Keywords:
Computational Complexity,
Sensitivity to Uncertainty,
Partition Problem,
Parametric Complexity Classes,
Promise Problems,
Parameterized Complexity,
Information Entropy,
Kolmogorov Complexity.

\section{Introduction}

The central question in theoretical computer science is the $P$ vs. $NP$ problem, which asks whether every problem whose solution can be quickly verified can also be quickly solved \cite{Fortnow2009TheSO}. While this question is of immense theoretical importance and represents one of the seven Millennium Prize Problems, its classical formulation possesses distinct practical limitations. In the real world, the fact that a problem belongs to the $NP$-complete class often discourages practitioners and software engineers from developing applications that rely on solving it. However, this pessimistic worst-case characterization is frequently a poor representation of practical difficulty. Many computationally hard tasks exhibit highly tractable behavior in practice because real-world problem instances are rarely random, unstructured, or worst-case; instead, they satisfy specific structural properties, symmetries, or constraints that drastically reduce computational complexity.

To address this gap, this paper does not question the importance of the $P$ vs.\ $NP$ question, which by definition concerns the unrestricted problem. Instead, it asks a complementary question that practitioners face daily: \emph{given that the input is known to satisfy a checkable property $K$, how expensive is the problem, and what does it cost to exploit $K$?} By focusing on the sensitivity of computational tasks to {\em input uncertainty}, we make such knowledge an explicit parameter of complexity statements and study the induced parametric complexity classes $\UP[K]$ and $\UNP[K]$, where $K$ denotes a property. When extra knowledge (which lessens uncertainty) is available and exploited, the cost of solving a hard problem can drop from exponential to polynomial or even linear time (a constant-time core step following verification). Nothing in this changes the status of $P$ vs.\ $NP$; what changes is that, for a given application, the relevant question becomes whether its instances satisfy a property $K$ that is cheap to check and that makes the problem easy.

Much of this intuition is classical, and the framework should be read against several established lines of work (\S\ref{sec:RelatedWork}): promise problems, parameterized complexity and kernelization, average-case and generic-case complexity, instance optimality and beyond-worst-case analysis, and algorithm selection. Our aim is a single accounting in which the property $K$, the cost of verifying it, the cost of locating the information that remains unknown, and the size of that residual information all appear explicitly. One might ask, in particular, how this relates to {\em average-case complexity} \cite{levin1986avg,bogdanov2006avg}, which measures the expected cost over a distribution of inputs. Average-case complexity asks whether the \emph{typical} instance is easy; the present framework asks whether there is a \emph{checkable} structural property $K$ such that \emph{every} instance satisfying $K$ is easy. The former is about typicality, the latter about {\em certifiability}. If a problem $P$ is polynomial on inputs satisfying $K$ (i.e.\ $P\in\UP[K]$), then it is average-case polynomial for every distribution concentrated on $\{x\models K\}$; conversely, errorless average-case algorithms, which never err but may answer ``don't know'', implicitly define such a set, but not necessarily one that is cheaper to check than running the algorithm. The precise relation is analyzed in \S\ref{sec:Originality}.

{\bf Contributions.} (1) A cost model based on fixed algorithms (\S\ref{sec:SU}), which avoids the degeneracy of per-instance infima, with promise and guarded versions of the restricted worst-case cost (\S\ref{sec:MeasuringSU}). (2) The residual-uncertainty metric $\Psi(n,K)$, its task-relative and accessible variants, and its relation to counting and to Kolmogorov complexity (\S\ref{sec:MeasuringSU}). (3) An Uncertainty Reduction Theorem that bounds the guarded cost through an explicitly defined reduced problem, and its relation to kernelization (\S\ref{sec:UncertaintyReductionTheorem}). (4) Structural results on the classes $\UP[K]$ and $\UNP[K]$: their identification with polynomial-time promises, the role of coverage, the consequences of small $\Psi$, the characterization of FPT as uniform membership in a graded family of properties, a free-variable threshold for SAT, and a coverage-based notion of knowledge resistance, instantiated for factoring (\S\ref{sec:Parametric}--\S\ref{sec:NPProblemsSensitiveInsensitive}). (5) Case studies and applications to guarded dispatch with latency contracts, guard ordering, agentic AI and quantum computing (\S\ref{sec:Originality}--\S\ref{sec:SE}).

\subsection{Related Work}
\label{sec:RelatedWork}
\emph{Promise problems} \cite{evenselman1984promise,goldreich2006promise} restrict a decision problem to inputs satisfying a predicate; when the predicate is decidable in polynomial time, the classes $\UP[K]$ and $\UNP[K]$ studied here are exactly restrictions of this kind (\S\ref{sec:Parametric}). \emph{Parameterized complexity} \cite{downey1999,cygan2015parameterized} measures cost as a function of the input size and a parameter $k$; \emph{kernelization} \cite{fomin2019kernelization}, i.e.\ polynomial-time reduction to an equivalent instance whose size depends only on $k$, is the closest classical analogue of the reduction to the $\Psi(n,K)$ determining cells used in our main theorem. Structural restrictions that make NP-hard problems polynomial (graph classes, bounded treewidth, dichotomy theorems such as Schaefer's \cite{schaefer1978complexity}) supply most of the concrete properties $K$ used in this paper. \emph{Average-case complexity} \cite{levin1986avg,ben-david1992avg,bogdanov2006avg,impagliazzo1995personal} and \emph{generic-case complexity} \cite{kapovich2003generic} ask whether typical, respectively almost all, instances are easy. \emph{Instance optimality} \cite{fagin2003optimal,afshani2017instance}, \emph{smoothed analysis} \cite{spielman2004smoothed} and the broader program of \emph{beyond-worst-case analysis} \cite{roughgarden2021beyond} refine worst-case bounds using instance features or perturbations; our cost model is deliberately simpler (fixed algorithms, worst case over $\{x\models K\}$). Finally, routing an instance to a specialized solver is Rice's \emph{algorithm selection problem} \cite{rice1976algorithm}, realized in portfolio solvers such as SATzilla \cite{xu2008satzilla}, and ordering guards by a cost/probability ratio is an instance of Smith's rule \cite{smith1956various} and of optimal sequential testing \cite{simon1975optimal}. The contribution of this paper is to combine these ingredients into one explicit accounting, not to supersede any of them.

The remainder of this paper is organized as follows.
Section~\ref{sec:TheRoleOfUncertainty} motivates the role of uncertainty through a simple game related to the Two Envelopes Paradox.
Section~\ref{sec:SU} fixes the cost model, introduces Sensitivity to Uncertainty, and presents a case study on the Partition Problem.
Section~\ref{sec:MeasuringSU} defines restricted, promise and guarded costs and the residual-uncertainty metric $\Psi(n,K)$, and relates $\Psi$ to counting and to Kolmogorov complexity.
Section~\ref{sec:UncertaintyReductionTheorem} states and proves the Uncertainty Reduction Theorem.
Section~\ref{sec:Parametric} introduces the classes $\UP[K]$ and $\UNP[K]$, coverage and the Uncertainty Sensitivity Index (USI), and analyzes what small residual uncertainty implies.
Section~\ref{sec:NPProblemsSensitiveInsensitive} studies Graph Isomorphism, the knowledge resistance of factoring, the relation to FPT, and a free-variable threshold for SAT.
Section~\ref{sec:Originality} positions the framework and presents further case studies (Partition, SAT, Vertex Cover and TSP).
Section~\ref{sec:SE} discusses applications in software engineering, agentic AI and quantum computing.
Finally, Section~\ref{sec:CR} concludes the paper.


\section{The Role of Uncertainty through a Simple Game}
\label{sec:TheRoleOfUncertainty}

To understand the fundamental role of uncertainty and unknownness in computational complexity, consider a simple game played between two individuals, 
say Yannis and Tonia. Yannis shows his closed fists to Tonia and tells her:
\begin{quote}
\textit{``Tonia, in each of my hands I have a monetary coin. The value of one hand is double the value of the other. Which hand do you select?''}
\end{quote}

Tonia selects Yannis's left hand. Yannis opens his palm and Tonia sees that there is exactly $1$ Euro in that hand. Yannis then poses the following subsequent question:
\begin{quote}
\textit{``Would you like to take this Euro, or would you like to take whatever my right hand has?''}
\end{quote}

Tonia performs an initial probabilistic reflection: if the left hand contains $1$ Euro (let this be $x = 1$), the right hand must contain either $0.5$ Euro ($x/2$) or $2$ Euros ($2x$). Assuming an equal probability of $0.5$ for each scenario, she computes her expected gain if she switches to the right hand, as
$
\text{Expected Gain} = \frac{2 + 0.5}{2} = 1.25\text{ Euros}.
$
This expected value is strictly greater than $1$ Euro, suggesting that she should change hands. However, on second thought, Tonia realizes this reasoning is fallacious: there is no rational reason to select one hand and then switch to the other in a symmetric state, as she could repeat this switching reasoning infinitely. This is a manifestation of the classic \textit{Two Envelopes Paradox} \cite{broome1995two}.

Her initial reflection was wrong because it mixed up the notion of structural uncertainty or unknownness with probability. The values in Yannis's hands are already fixed and certain; they are not random variables drawn from a known probability distribution. The actual sample space is either $\{0.5\text{ Euro}, 1\text{ Euro}\}$ or $\{1\text{ Euro}, 2\text{ Euros}\}$. The uncertainty lies in which sample space Yannis chose, which is a structural unknown, not a probability distribution. In the standard analysis of the paradox the same point appears in probabilistic form: the step that fails is the assumption that, whatever amount is observed, the other hand is equally likely to hold half or double that amount, since no proper prior distribution over the amounts has this property for every amount \cite{broome1995two}. Either way, the probabilities Tonia used were not justified by anything she knew.

The analogy we want to draw is limited but, we believe, useful. The $P \stackrel{?}{=} NP$ question is, by definition, a question about \emph{all} inputs of a given size, and as such it is perfectly well posed. The notions of best, average, and worst-case complexity, however, all encode a stance towards our uncertainty about the input, and the worst case encodes the most pessimistic one: it treats the input as chosen by an adversary from the whole of $I_n$. In applications we usually know more, not in the form of a probability distribution (which, as in the game, we often cannot justify), but in the form of \emph{facts} about the instance that can be checked. The rest of the paper makes such facts an explicit parameter $K$ of complexity statements, and charges the cost of checking and exploiting them.

\section{Computational Tasks and Sensitivity to Uncertainty}
\label{sec:SU}

We shall now formalize how the cost of solving a problem {\em varies} across instances of the same size. Informally, \textit{Sensitivity to Uncertainty (SU)} measures how much the computational cost of carrying out a task varies across different instances of the same size.

\paragraph{Computational model.} We work in the unit-cost word RAM: the input is an array $A[1..n]$ of $n$ \emph{cells}, each holding a word over a cell alphabet $\Sigma$ (e.g.\ $\Sigma=\{0,1\}^{w}$ for word size $w$); any cell can be read in $O(1)$ time, and arithmetic and comparisons on words cost $O(1)$. The size of an instance is its number of cells $n$; when the length in bits matters (e.g.\ for weakly NP-hard numerical problems such as Partition) we write $N$ for it. Throughout, $n$ denotes the number of cells unless stated otherwise; in case studies we keep the customary parameters (the number of variables of a formula, of cities, or of vertices), and we state the input size explicitly when it differs. Let $P$ be a computational problem, and let $I_n$ be the set of all valid input instances of size $n$.
	

\paragraph{Costs are measured for a fixed algorithm.} For an algorithm $\mathcal{A}$ that is correct on all inputs, let $T_{\mathcal{A}}(x)$ denote its running time on $x$, and write $\mathrm{Cost}_{\mathcal{A}}(P,x)=T_{\mathcal{A}}(x)$. We deliberately do \emph{not} define the cost of an instance as the infimum, over all correct algorithms, of the time spent on that instance. That per-instance infimum is degenerate: for every fixed $x$, the algorithm ``if the input equals $x$, output the stored answer; otherwise run a general solver'' is a single uniform algorithm, correct on all inputs of all sizes, that spends only $O(|x|)$ time on $x$. Hence the per-instance infimum is $O(n)$ for \emph{every} instance of \emph{every} decidable problem, whatever its general complexity, and every maximum or variation built from it is trivial. Taking the infimum separately for each $n$ does not help either, because for a fixed $n$ an algorithm may hard-wire the answers for all of the finitely many inputs of that length; and by Blum's speedup theorem \cite{blum1967machine} some problems have no best algorithm at all. The standard remedy, which we adopt, is to quantify over algorithms \emph{once}, before any instance is chosen, and to state bounds asymptotically.

\begin{definition}[Complexity under a property]
\label{def:complexityUnderK}
Let $K$ be an input property and $f:\mathbb{N}\to\mathbb{N}$. We say that $P$ is \emph{solvable in time $O(f)$ under $K$} if there exist a single algorithm $\mathcal{A}$, correct on all inputs satisfying $K$, and constants $c,n_0$ such that for all $n\ge n_0$
\[
W_{\mathcal{A}}(n,K)\;:=\;\max_{x\in I_n,\;x\models K} T_{\mathcal{A}}(x)\;\le\; c\,f(n).
\]
For $K=\epsilon$ (no knowledge) this is the classical worst-case complexity of $P$.
\end{definition}

\noindent Hard-wiring finitely many instances into $\mathcal{A}$ now changes only constants, and padding an algorithm with gratuitous steps is irrelevant: upper bounds are existence claims (``some $\mathcal{A}$ is fast''), and lower bounds have the form ``for every $\mathcal{A}$, $W_{\mathcal{A}}(n,K)$ is not $O(f(n))$''.

\begin{definition} \label{def:UnrestrictedCostVariation}
The \textit{(unrestricted) cost variation} of an algorithm $\mathcal{A}$ is the difference between its worst-case and best-case cost on instances of size $n$:
\begin{equation*}
\Delta_{\mathcal{A}}(n) = \max_{x \in I_n} T_{\mathcal{A}}(x) - \min_{x \in I_n} T_{\mathcal{A}}(x).
\end{equation*}
\end{definition}

\noindent For an algorithm that reads its whole input the minimum is at least $n$, so the substance of $\Delta_{\mathcal{A}}(n)$ lies in the maximum term, i.e.\ in $W_{\mathcal{A}}(n,\epsilon)$. When we write $\Delta_P$ or $W_P$ without naming an algorithm, we refer to a fixed algorithm stated in the text (typically the best one known).

Because $\Delta_{\mathcal{A}}$ is a difference of step counts, it depends on constant factors and machine details; for asymptotic statements we use a ratio instead, and classify algorithms and problems by their sensitivity as follows:

\begin{definition}[Not Sensitive to Uncertainty - NSU]
The \emph{sensitivity ratio} of an algorithm $\mathcal{A}$ is $\sigma_{\mathcal{A}}(n)=\max_{x\in I_n}T_{\mathcal{A}}(x)\,/\,\min_{x\in I_n}T_{\mathcal{A}}(x)$. The algorithm is \emph{Not Sensitive to Uncertainty (NSU)} if $\sigma_{\mathcal{A}}(n)=O(1)$, i.e.\ its cost is determined by the input size up to a constant factor, and \emph{exactly} NSU if $\Delta_{\mathcal{A}}(n)=0$ for every $n$. A problem $P$ is NSU if there is a function $f$ such that (i) some algorithm $\mathcal{A}$ for $P$ has $W_{\mathcal{A}}(n,\epsilon)=O(f(n))$, and (ii) every algorithm $\mathcal{B}$ that solves $P$ satisfies $\min_{x\in I_n}T_{\mathcal{B}}(x)=\Omega(f(n))$. Otherwise $P$ is \emph{sensitive to uncertainty (SU)}.
\end{definition}

In words, $P$ is NSU if \emph{every} instance of size $n$ is, for \emph{every} algorithm, as hard as the worst case up to a constant factor. If $P$ is NSU with function $f$, then every algorithm $\mathcal{A}$ with $W_{\mathcal{A}}(n,\epsilon)=O(f(n))$ is NSU, since its minimum is $\Omega(f(n))$ and its maximum is $O(f(n))$.

\begin{remark}[Why the definition quantifies over all algorithms]
\label{rem:nsu-padding}
A weaker definition, ``$P$ admits an NSU algorithm that is worst-case optimal'', would be trivial. Given any algorithm $\mathcal{A}$ whose worst-case time $f(n)$ is time-constructible, the padded algorithm that runs $\mathcal{A}$ and then idles until $f(n)$ steps have elapsed is still correct and worst-case optimal, and has $\sigma=1$; so essentially every problem would be NSU. Requiring that \emph{all} worst-case-optimal algorithms be NSU avoids padding, but becomes vacuous for problems that have no worst-case-optimal algorithm, which exist by Blum's speedup theorem \cite{blum1967machine}. Condition~(ii) avoids both problems: it is a lower bound on the best case of every algorithm, so padding cannot help, and it does not presuppose that an optimal algorithm exists.
\end{remark}

\begin{remark}[How large can the sensitivity ratio be?]
\label{rem:sigma-range}
Always $\sigma_{\mathcal{A}}(n)\ge 1$, with equality iff $\Delta_{\mathcal{A}}(n)=0$, and in general $\Delta_{\mathcal{A}}(n)=\min_{x}T_{\mathcal{A}}(x)\cdot(\sigma_{\mathcal{A}}(n)-1)$. Since every run takes at least one step, $\sigma_{\mathcal{A}}(n)\le W_{\mathcal{A}}(n,\epsilon)$; if $\mathcal{A}$ reads its whole input on every instance, $\sigma_{\mathcal{A}}(n)\le W_{\mathcal{A}}(n,\epsilon)/n$. There is no universal upper bound, because by the time hierarchy theorem $W_{\mathcal{A}}$ can be any time-constructible function. For natural algorithms, $\sigma$ ranges from constant (summation, $\sigma=1$) through polynomial (insertion sort, $\sim n$ between sorted and reversed input) to exponential: exhaustive search for Partition with a fast path for all-equal inputs has best case $\sim n$ and worst case $\sim 2^{n}n$, and trial-division factoring of $n$-bit integers has best case $O(n)$ (even $N$) and worst case $\sim 2^{n/2}$ (balanced semiprimes).
\end{remark}

To grasp the idea, let's start from a trivial task.

\begin{lemma}
The problem of summing $n$ integers (SUM) is Not Sensitive to Uncertainty (NSU).
\end{lemma}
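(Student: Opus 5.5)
We take $f(n)=n$ in the definition of NSU and verify its two conditions separately. Condition~(i) needs an upper bound. The obvious accumulator algorithm reads $A[1],\dots,A[n]$ once and adds each cell to a running total. In the unit-cost word RAM each read and each addition costs $O(1)$, so $W_{\mathcal{A}}(n,\epsilon)=O(n)$. In fact every run has exactly the same sequence of operations, so this algorithm is even exactly NSU, with $\Delta_{\mathcal{A}}(n)=0$. We will state the standing assumption that the sum fits in a constant number of words, or else that word overflow is handled at $O(1)$ cost per addition as the model stipulates.

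Condition~(ii) is the substantive part. Here we must show that \emph{every} correct algorithm $\mathcal{B}$ satisfies $\min_{x\in I_n}T_{\mathcal{B}}(x)=\Omega(n)$. We plan an adversary or sensitivity argument that applies to each instance individually, not only to the worst case. Fix any instance $x$ and suppose $\mathcal{B}$ halts on $x$ without reading some cell $i$. Let $x'$ agree with $x$ everywhere except that $A[i]$ is increased by one; if $A[i]$ holds the maximal word, decrease it instead. The run of $\mathcal{B}$ on $x'$ is step-for-step identical to its run on $x$, since the two inputs differ only in a cell that is never accessed. So $\mathcal{B}$ produces the same output on both. But $\mathrm{SUM}(x')\ne\mathrm{SUM}(x)$, which contradicts correctness. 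Therefore $\mathcal{B}$ reads all $n$ cells on every input, and since each read costs at least one step, $T_{\mathcal{B}}(x)\ge n$ for every $x$.

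The main obstacle is making the indistinguishability step rigorous in the RAM model. On a RAM, cell addresses may be computed dynamically, so ``the cell is never read'' must be stated carefully. We will argue by induction on the number of steps that the two configurations coincide as long as no instruction accesses address $i$. A second point to check is that the perturbed input $x'$ is still in $I_n$, that is, a valid instance of the same size. This holds because $\Sigma$ has at least two symbols, so a different word can always be placed in cell $i$. One must also confirm that changing one summand really changes the sum. This is immediate for integers, and under the paper's fixed word size it holds provided sums are computed exactly rather than modulo $2^w$. We will make this assumption explicit. Combining (i) and (ii) with $f(n)=n$ gives that SUM is NSU.
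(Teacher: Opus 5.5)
Your proposal is correct and follows the paper's proof essentially step for step. You take $f(n)=n$ and use the accumulator algorithm, whose operation sequence is identical on every input, for condition~(i); for condition~(ii) you use the unread-cell perturbation argument to show that every correct algorithm reads all $n$ cells on every input, and your extra care about dynamic addressing on the RAM and about exact (non-modular) sums refines the same argument, in line with the paper's parenthetical caveat about the bit model.
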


\begin{proof}
Take $f(n)=n$. (i) The standard algorithm reads the $n$ cells and performs exactly $n-1$ additions, so in the unit-cost word RAM its running time is $c\cdot n$ for some constant $c$, independently of the values; hence its cost variation is $c\, n-c\, n=0$ for all $n$ (it is exactly NSU, with $\sigma=1$). (ii) Every correct algorithm $\mathcal{B}$ reads every cell on \emph{every} input: if $\mathcal{B}$ did not read cell $i$ on input $x$, then on the input obtained from $x$ by changing cell $i$ (which changes the sum), $\mathcal{B}$ would perform exactly the same computation and give the same, now wrong, answer. Hence $\min_{x\in I_n}T_{\mathcal{B}}(x)\ge n$. (In the bit model the cost of an addition depends on the lengths of the operands, and SUM is NSU only among inputs of equal bit length.)
\end{proof}

Conversely, many hard problems are highly sensitive to uncertainty. Their worst-case complexity is exponential (under our current understanding), while their best-case complexity (on highly structured inputs) can be linear or even constant. This variation represents a spectrum of difficulty that a single worst-case bound does not capture. The following observation makes this precise.

\begin{proposition}[Fast paths imply sensitivity]
\label{prop:fastpath}
Let $P$ be a problem such that no algorithm solves it with $W_{\mathcal{A}}(n,\epsilon)=O(n)$. Suppose there is a set $E$ of instances, containing at least one instance of every size, such that some algorithm decides in $O(n)$ time whether $x\in E$ and, if so, outputs the answer of $P$ on $x$. Then $P$ is sensitive to uncertainty.
\end{proposition}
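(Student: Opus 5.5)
We argue by contradiction: assume $P$ is NSU, witnessed by some function $f$, and derive a linear-time algorithm for $P$, contradicting the first hypothesis. The NSU definition gives two conditions: (i) some algorithm $\mathcal{A}$ has $W_{\mathcal{A}}(n,\epsilon)=O(f(n))$, and (ii) every algorithm $\mathcal{B}$ solving $P$ satisfies $\min_{x\in I_n}T_{\mathcal{B}}(x)=\Omega(f(n))$. The plan is to apply (ii) to a fast-path algorithm built from $E$, which forces $f(n)=O(n)$, and then to feed this back into (i).

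\textbf{Step 1 (the fast-path algorithm).} Let $\mathcal{D}$ be the assumed $O(n)$ procedure that decides whether $x\in E$ and, if so, outputs the answer of $P$ on $x$. Let $\mathcal{A}$ be the algorithm from (i), or any correct general solver for $P$. Define $\mathcal{B}$ as follows: run $\mathcal{D}$ on $x$; if it reports $x\in E$, output its answer; otherwise run $\mathcal{A}$. Then $\mathcal{B}$ is correct on all inputs. Since $E$ contains an instance $x_n$ of every size $n$, we get $\min_{x\in I_n}T_{\mathcal{B}}(x)\le T_{\mathcal{B}}(x_n)\le c\,n$ for all large $n$. Applying (ii) to $\mathcal{B}$ gives $f(n)\le c'\,n$ for all large $n$, that is, $f(n)=O(n)$.

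\textbf{Step 2 (contradiction).} By (i), $W_{\mathcal{A}}(n,\epsilon)=O(f(n))=O(n)$, so $\mathcal{A}$ solves $P$ in linear worst-case time. This contradicts the hypothesis that no algorithm achieves $W_{\mathcal{A}}(n,\epsilon)=O(n)$. Hence $P$ is SU.

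\textbf{Main obstacle.} Most of the work lies in reading the quantifiers carefully. Condition (ii) is an $\Omega$ bound that holds only for large $n$ (or infinitely often, depending on the convention), so I would fix the convention that $\Omega$ means ``for all sufficiently large $n$'' and check that the instances $x_n$ cover every such $n$, which the hypothesis guarantees. I would also confirm that $\mathcal{B}$ counts as ``an algorithm that solves $P$'' in the sense of the definition, meaning it is correct on all inputs. This holds because the fallback branch is a correct general solver, and such a solver exists because $P$ is assumed decidable in the setting of (i). No earlier result beyond the definition of NSU is needed.
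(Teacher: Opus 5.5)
Your proof is correct and is essentially the paper's argument: assume NSU with witness $f$, apply condition~(ii) to the algorithm that tries the fast path and otherwise falls back to a correct solver to get $f(n)=O(n)$, then use condition~(i) to obtain a linear-time solver, which contradicts the hypothesis. Your remarks on the $\Omega$ convention and on the correctness of the combined algorithm spell out points the paper's proof uses without stating them.
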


\begin{proof}
Suppose $P$ were NSU with function $f$, and let $\mathcal{A}$ be an algorithm with $W_{\mathcal{A}}(n,\epsilon)=O(f(n))$, which exists by condition~(i). The algorithm $\mathcal{A}'$ that first tests whether $x\in E$, answers directly if so, and otherwise runs $\mathcal{A}$, is correct and spends $O(n)$ time on the instances of $E$. By condition~(ii) applied to $\mathcal{A}'$, $f(n)=O(\min_{x}T_{\mathcal{A}'}(x))=O(n)$, so $W_{\mathcal{A}}(n,\epsilon)=O(n)$, contradicting the assumption on $P$.
\end{proof}

\noindent For example, Partition (with $E$ the all-equal instances) and Integer Factorization (with $E$ the even integers) are sensitive to uncertainty unless they can be solved in linear time.

\subsection{Case Study: The Partition Problem}
\label{sec:ThePartitionProblem}

The Partition Problem is a classic $NP$-complete problem \cite{mertens2006easiest} stated as follows:
\begin{problem}[Partition Problem]
We have a finite multiset (bag) of positive integers $S = \{s_1, s_2, \dots, s_n\}$. Is there any subset $S' \subseteq S$ such that:
\begin{equation*}
\sum_{s \in S'} s = \sum_{s \in S \setminus S'} s = \frac{1}{2} \sum_{s \in S} s
\end{equation*}
\end{problem}

In the general case (where we have no extra knowledge about the input), solving the partition problem is $NP$-complete. It can be solved in $O(2^n)$ time via exhaustive search, or $O(n \cdot B)$ time via dynamic programming, where $B = \sum_{s \in S} s$. Note that the dynamic programming solution is pseudo-polynomial, meaning it becomes extremely expensive when the integers are large.
	In general, although there are several algorithms for the partition problem with time complexity less than $O(2^n)$,
	e.g. 	\cite{horowitz1974computing,schroeppel1981algorithm,howgrave2010new},
	all these are exponential,
	and for reasons of simplicity (and without loss of generality),
	hereafter  we will assume that the time complexity of the Partition problem is in $O(2^n)$.

However, if we possess extra knowledge about the input,
expressed as a {\em property} or condition $K$ that the input satisfies,
the computational cost can collapse. Below we analyze five specific cases based on different levels of prior knowledge, demonstrating how a ``KNOWING'' clause (K-clause) reduces complexity:

\begin{itemize}
\item \textit{Case 1: Two Integers ($K: |S| = 2$)}. If we know a priori that the multiset contains exactly two integers, the problem can be solved with a single comparison operator: check if $s_1 = s_2$. If true, return \texttt{True}; else return \texttt{False}. 
Therefore $\text{Cost} = O(1)$. Since $|S|=2$ fixes the size of the input, this case says nothing about growth in $n$; we include it only as the simplest example of knowledge that reduces the problem to a single comparison.
The cost to verify the K-clause itself ($|S| = 2$) requires scanning the list once to verify its size, which takes $O(n)$ time.

\item \textit{Case 2: Same Integers ($K: s_i = k$ for all $i$)}.
If we know that all elements in the multiset are identical, we only need to check if the number of elements is even: $|S| \bmod 2 = 0$. If true, we can partition them into two groups of equal size; else, an equal partition is impossible.
Therefore, $\text{Cost} = O(1)$.
Again, the verification of the K-clause (that all elements are equal to some $k$) requires $O(n)$ time to scan and compare elements.

\item \textit{Case 3: Arithmetic Sequence ($K: s_i = i$ for all $i$)}. 
If we know that the elements are the first $n$ consecutive integers $S = \{1, 2, \dots, n\}$, we can determine if a partition exists in constant time. The sum of the first $n$ integers is $H = n(n+1)/2$. A partition is possible if and only if $H$ is even, which occurs if and only if: 
$n \bmod 4 = 0 \quad \text{or} \quad (n + 1) \bmod 4 = 0$.
Thus, we can solve the problem with a simple modulo operation,
i.e. $\text{Cost} = O(1)$.
Verifying the K-clause takes $O(n)$ time: a single scan if the elements are given in order ($s_i=i$), or a scan with a bit array of size $n$ if $S$ is given as an unordered multiset.

\item \textit{Case 4: Equal Elements modulo 3 ($K: |S| \bmod 3 = 0$ and $s_i = k$)}. 
If we know that all elements are identical ($s_i = k > 0$) and their total count is a multiple of $3$, then $|S| = 3m$ for some integer $m$. The sum of elements is $3mk$. A partition into two equal halves requires each half to sum to $1.5mk$. Since $s_i = k$, this would require selecting $1.5m$ elements. If $m$ is odd, $1.5m$ is not an integer, meaning we would need to split a physical element, which is impossible. Thus, if $m$ is odd (i.e., $|S|$ is an odd multiple of 3), we can immediately return \texttt{False} in constant time,
therefore $\text{Cost} = O(1)$.
This case is subsumed by Case~2: $|S|=3m$ is odd exactly when $m$ is odd, so the answer is again determined by the parity of $|S|$. We keep it to illustrate that different formulations of knowledge can lead to the same collapse.

\item \textit{Case 5: No Knowledge ($K = \epsilon$)}.
This corresponds to the general case where we have no prior structural knowledge. The worst-case complexity remains exponential,
i.e. $\text{Cost} = O(2^n)$.
\end{itemize}

In summary, the availability of structural knowledge allows us to solve specific instances of an $NP$-complete problem in constant time $O(1)$ once the property is known to hold (the \emph{promise} setting of \S\ref{sec:MeasuringSU}), which is entirely independent of $n$. Crucially, verifying whether the input satisfies the property $K$ (the cost of the K-clause verification) 
typically 
takes $O(n)$ time.  Therefore, if the K-clause is verified, the total cost of solving the instance is $O(n) + O(1) = O(n)$, bypassing the exponential worst-case bottleneck.
For Cases 1--4 this check is a single linear scan. This is not a universal law, however: other structural predicates may be
costlier to verify. For instance, recognizing perfect graphs takes $O(|V|^9)$ time \cite{chudnovsky2005recognizing}, although on perfect graphs maximum clique and minimum coloring are solvable in polynomial time \cite{grotschel1988geometric}; and checking that a candidate set $B$ is a strong backdoor of a formula takes time exponential in $|B|$ (\S\ref{sec:MoreCollapse}). By contrast, the $\Theta(n^2)$ check of the Monge property of an $n\times n$ distance matrix (\S\ref{sec:TSPwithMonge}) is linear in the input size, which is $n^2$ cells.


\section{Measuring Sensitivity to Uncertainty}
\label{sec:MeasuringSU}

To formalize how extra knowledge reduces complexity, we restrict the instances over which costs are measured. Let $K$ denote a property satisfied by the problem instance; we write $x\models K$ when $x$ satisfies $K$, and we assume throughout that $K$ is decidable.

\begin{definition}
For an algorithm $\mathcal{A}$, the \emph{cost variation under $K$} is
\begin{equation*}
\Delta_{\mathcal{A}}(n, K) = \max_{x \in I_n, x \models K} T_{\mathcal{A}}(x) - \min_{x \in I_n, x \models K} T_{\mathcal{A}}(x),
\end{equation*}
and its maximum term is the restricted worst-case cost $W_{\mathcal{A}}(n,K)$ of Definition~\ref{def:complexityUnderK}.
\end{definition}

We denote the absence of knowledge by $\epsilon$, so that $\Delta_{\mathcal{A}}(n, \epsilon) = \Delta_{\mathcal{A}}(n)$ is the unrestricted cost variation of Section~\ref{sec:SU}. Two settings must be distinguished, because they differ exactly by the cost of checking $K$.

\begin{definition}[Promise and guarded cost]
\label{def:promiseguarded}
In the \emph{promise} setting an algorithm $\mathcal{A}$ may assume that $x\models K$, and its behavior on other inputs is unconstrained; its \emph{promise cost} $W^{\mathrm{prom}}_{\mathcal{A}}(n,K)$ is its restricted worst-case cost $W_{\mathcal{A}}(n,K)$. In the \emph{guarded} setting the algorithm must first check whether $x\models K$ and, if the check fails, reject or fall back to a general solver; the guarded algorithm $\mathcal{A}^{\mathrm{guard}}$ that runs the check and then $\mathcal{A}$ has \emph{guarded cost}
\[
W^{\mathrm{guard}}_{\mathcal{A}}(n,K)\;\le\; C_{\mathrm{verify}}(n,K)+W^{\mathrm{prom}}_{\mathcal{A}}(n,K),
\]
where $C_{\mathrm{verify}}(n,K)$ is the worst-case running time of a fixed decision procedure for $K$ on inputs of size $n$. We abbreviate $C_{\mathrm{verify}}(n,K)$ to $C_{\mathrm{verify}}(K)$ when $n$ is clear from the context, and we drop the subscript $\mathcal{A}$ when the algorithm is clear.
\end{definition}

For the Partition problem and the algorithms of Section~\ref{sec:ThePartitionProblem} we have:
\begin{itemize}
\item $W^{\mathrm{prom}}(n,|S|=2)=O(1)$, \ $W^{\mathrm{prom}}(n,s_i=k\ \forall i)=O(1)$ and $W^{\mathrm{prom}}(n,s_i=i\ \forall i)=O(1)$;
\item the corresponding guarded costs are $O(n)$, dominated by the check;
\item for $K=\epsilon$ the best known worst-case bounds are exponential ($O^{*}(2^{n/2})$, by the algorithms of Horowitz--Sahni and Schroeppel--Shamir \cite{horowitz1974computing,schroeppel1981algorithm}), and the cost variation of these algorithms is correspondingly exponential.
\end{itemize}
Statements of the form ``$\Delta=0$'' depend on the algorithm: the parity test under $s_i=k$ has zero variation, whereas an algorithm with early exits does not.

The maximum term is the \emph{worst-case complexity of $P$ restricted to $\{x\models K\}$}:
\begin{equation}
\label{eq:max-def}
W_{\mathcal{A}}(n,K)\;=\;\max_{x\in I_n,\;x\models K} T_{\mathcal{A}}(x),
\end{equation}
and for $K=\epsilon$ it is the classical worst-case complexity, so the framework subsumes the standard analysis as the special case $K=\epsilon$. What the framework adds is the explicit dependence on $K$: as structural knowledge increases, the restricted worst case can drop from exponential (Partition) to polynomial or linear time. We call a problem \emph{knowledge-vulnerable} with respect to a family of properties if some property of the family that covers a substantial part of the instance space (Section~\ref{sec:Parametric}) makes it easy, and \emph{knowledge-resistant} if no such property does. Section~\ref{sec:ReallyHard} makes this precise and explains why Integer Factorization is the canonical, conditional, example of knowledge resistance, even though many specific properties do make factoring easy. This dependence on $K$ is the axis that $K$-free worst-case statements do not express.

\subsection{Characterizing Input Uncertainty: The Metric $\Psi(n, K)$}
\label{sec:ToPsi}

We introduce a metric $\Psi(n, K)$ to quantify the amount of unknown information remaining in the input once property $K$ is known. Let the input be the array $A[1..n]$ of cells of \S\ref{sec:SU}. For $x\models K$, a set $Q\subseteq\{1,\dots,n\}$ \emph{determines} $x$ under $K$ if every $y\in I_n$ with $y\models K$ and $y_Q=x_Q$ equals $x$; that is, knowing $K$ and the values $A[i]_{i\in Q}$ uniquely determines $A[1..n]$. We define $\Psi(n,K)$ as the number of cells that must be read, in the worst case over instances, in order to eliminate all uncertainty and reconstruct the input:
\begin{equation*}
\Psi(n, K) \;=\; \max_{x\in I_n,\;x\models K}\;\min \bigl\{\, |Q| \;:\; Q \subseteq \{1,\dots,n\} \text{ determines } x \text{ under } K \,\bigr\}.
\end{equation*}
A determining set is \emph{canonical} if a single set $Q$ of size $\Psi(n,K)$ determines every $x\models K$ of size $n$.

Here the cells range over the cell alphabet $\Sigma$ of the computational model (\S\ref{sec:SU}); for Partition a cell holds one integer. Note that $\Psi$ concerns the reconstruction of the \emph{input}; the number of cells needed to \emph{answer} the question can be much smaller (the task-relative variant $\Psi_{\mathrm{task}}$ of Remark~\ref{rem:recon-vs-task}).

Let us evaluate $\Psi(n, K)$ for the Partition cases analyzed previously:
\begin{itemize}
\item $\Psi(n, |S| = 2) = 2$. We only need to read 2 cells to know the full input.
\item $\Psi(n, \text{``allEqual''}) = 1$. We only need to read 1 cell to know the value $k$; all other cells are known to be equal to $k$.
\item $\Psi(n, \text{``}s_i = i\text{''}) = 0$. We need to read 0 cells because the input is completely determined by the arithmetic sequence property.
\item $\Psi(n, \epsilon) = n$. In the absence of knowledge, we must read all $n$ cells to understand the input.
\end{itemize}

\noindent
Note that the definition of $\Psi(n,K)$ is {\bf existential}: it asserts the
\emph{existence} of a determining set $Q$ of minimum cardinality, but it does
not guarantee that an algorithm can \emph{locate} such a set efficiently. In
general, discovering which cells form a determining set is itself a
computational problem with a non-trivial cost, which we formalize as the
\emph{find cost}.

\begin{definition}[Find cost]
\label{def:findcost}
For a property $K$, the \emph{find cost} $C_{\mathrm{find}}(n,K)$ is the worst-case running time, over instances $x\models K$ of size $n$, of a fixed procedure $A$ that outputs a determining set $Q\subseteq\{1,\dots,n\}$ of $x$ with $|Q|\le\Psi(n,K)$:
\[
C_{\mathrm{find}}(n,K)\;=\;\max_{x\in I_n,\;x\models K}\;T_{A}(x),
\]
where $T_A(x)$ is the running time of $A$ on instance $x$ until it returns a
determining set. As with costs in general (\S\ref{sec:SU}), we do not minimize over procedures separately for each $n$, which would be degenerate; statements about $C_{\mathrm{find}}$ are asymptotic statements about a fixed procedure, and a ``lower bound'' on $C_{\mathrm{find}}$ means that no procedure achieves a smaller asymptotic bound. When $Q$ can be determined immediately from the property
itself, e.g.\ for $|S|=2$ any two cells suffice, for ``all equal'' any
single cell suffices, for $s_i=i$ no cell is needed, we say that $Q$ is
\emph{visible} and $C_{\mathrm{find}}(n,K)=O(1)$.
\end{definition}

\begin{example}[A non-visible determining set]
\label{par:tree}
To see that $C_{\mathrm{find}}$ is not always $O(1)$, consider the property $K_{\mathrm{tree}}$: the input is the $n\times n$ matrix $D$ of pairwise distances between the $n$ vertices of an (unknown) tree $T$ with positive edge weights. Let $m=n^{2}$ denote the number of matrix cells, i.e.\ the input size. The $n-1$ cells $D[u,v]$ such that $\{u,v\}$ is an edge of $T$ form a determining set: their positions give the topology of $T$, their values give the edge weights, and every other entry is the sum of the edge weights along a path. Hence $\Psi(m,K_{\mathrm{tree}})\le n-1$. However, \emph{which} cells these are is not visible from $K$: finding them is the tree-reconstruction problem (a cell $D[u,v]$ corresponds to an edge iff there is no $w\notin\{u,v\}$ with $D[u,w]+D[w,v]=D[u,v]$), which takes polynomial time, between $O(n^{2})$ and $O(n^{3})$ depending on the method \cite{Buneman1971Recovery}.
\end{example}

\subsubsection{Variants of $\Psi(n,K)$}

Let's now define  $\Psi_{\mathrm{acc}}$, the {\bf constructive} counterpart of the {\em existential} $\Psi(n,K)$.

\begin{definition}[Accessible uncertainty]
	\label{def:psiac}
	$\Psi_{\mathrm{acc}}(n,K)$ is the least $s$ such that some fixed polynomial-time procedure outputs, for every $x\models K$ of size $n$, a determining set of $x$ of size at most $s$.
\end{definition}

Clearly $\Psi_{\mathrm{acc}}(n,K)\ge \Psi(n,K)$, with equality whenever determining sets of minimum size can be computed in polynomial time, e.g.\ when $Q$ is visible.

\paragraph{The prefix variant $\Psi_{\mathrm{pref}}$.}
\label{par:pref}
Now suppose we require the determining set to be a \emph{prefix} of the tape,
i.e.\ $Q=\{1,\dots,m\}$ for some $m$. Then the uncertainty-vanishing metric
becomes:
\[
\Psi_{\mathrm{pref}}(n,K)\;=\;\min\bigl\{\,m\;:\;
\text{knowing } K \text{ and } A[1..m] \text{ uniquely determine } A[1..n]\,\bigr\},
\]
where the condition must hold for every $x\models K$ of size $n$.
Obviously it holds $\Psi(n,K)\le \Psi_{\mathrm{pref}}(n,K)\le n$,
and reaching the lower bound exactly characterizes when the minimal residual
information happens to sit at the beginning of the tape.
This variant is \emph{syntactic} and \emph{operational}. The gain is that
$Q$ is known a priori, hence $C_{\mathrm{find}}(n,K)=O(1)$ always
and  the
Extraction stage needs only a sequential tape (no RAM-TM).


\subsection{Connection to Information Theory}

The metric $\Psi(n, K)$ is directly related to conditional Kolmogorov complexity
\footnote{The conditional Kolmogorov complexity $\kappa (x \mid K)$ of a string $x$ given an auxiliary description (or structural property) $K$ is defined as the length of the shortest binary program $p$ that, when executed on a reference universal Turing machine $U$ with $K$ provided as an auxiliary input, outputs $x$. Formally, $\kappa_U(x \mid K) = \min \{ |p| : U(p, K) = x \}$. In our framework, this measures the absolute minimum computational information required to completely specify the input instance $x$ once the structural constraints guaranteed by $K$ are already known and utilized.}
 \cite{li2008introduction,l_and_v_2019} and Shannon entropy \cite{Shannon1948AM}. The information that remains unknown about an input $x$ of size $n$ once $K$ is known can be measured by the conditional Kolmogorov complexity $\kappa(x\mid K,n)$ (defined like $\kappa(x\mid K)$, with $n$ also given to the program), and $\Psi$ bounds it:
 

\begin{lemma}[Compression Bound]
	\label{lem:compression}
	    Let $x$ be an input string of size $n$ satisfying a property $K$, let the cell alphabet $\Sigma$ be finite, and assume
	the active set $Q$ is \emph{canonical}, i.e.\ uniquely determined by $K$ and
	$n$ (no per-instance choice of which cells to read). Then the remaining
	uncertainty of $x$ given $K$ is bounded by its uncertainty-vanishing metric,
	i.e.,
\[
	\kappa(x \mid K, n) \;\le\; \Psi(n,K)\cdot \log|\Sigma| + O(1),
\]
	where $\Sigma$ is the cell alphabet.
\end{lemma}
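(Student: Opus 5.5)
The plan is to exhibit a short program that reconstructs $x$ from $K$ and $n$, with the values of the canonical determining set hard-coded. Since $Q$ is canonical, a single set $Q=\{q_1<\dots<q_s\}$ with $s=\Psi(n,K)$ determines every $x\models K$ of size $n$, and $Q$ is a function of $(K,n)$ alone. We therefore encode only the values $x_{q_1},\dots,x_{q_s}$, each a symbol of the finite alphabet $\Sigma$. Concatenating fixed-width codes of $\lceil\log|\Sigma|\rceil$ bits per symbol yields a string $w$ of length $s\lceil\log|\Sigma|\rceil$. To remove the ceiling and reach exactly $s\log|\Sigma|$, we instead encode the tuple as a single integer in $[0,|\Sigma|^s)$ written in binary, which takes $\lceil s\log|\Sigma|\rceil\le s\log|\Sigma|+1$ bits.

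The program $p$ consists of a fixed decoder $D$ (constant length, independent of $x$, $n$ and $K$) followed by $w$. On auxiliary input $(K,n)$, $D$ proceeds in four steps:
\begin{enumerate}
\item It recovers $Q$ from $(K,n)$ by enumerating candidate sets $Q'$ in a fixed order (by size, then lexicographically), and for each $Q'$ checking whether it determines every $x\models K$ of size $n$. This is a finite check over $I_n$, using decidability of $K$ and finiteness of $\Sigma$. The first such $Q'$ of minimum size is taken as the canonical $Q$.
\item It computes $s=|Q|$, which fixes the length of $w$. The program therefore needs no self-delimiting encoding, since $D$ knows how many bits to read.
\item It decodes the values $x_Q$ from $w$.
\item It enumerates all $y\in I_n$ with $y\models K$ and outputs the unique $y$ with $y_Q=x_Q$.
\end{enumerate}
By the definition of a determining set, the output of step~4 is exactly $x$. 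Hence $\kappa(x\mid K,n)\le |D|+|w|\le \Psi(n,K)\log|\Sigma|+O(1)$.

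The main obstacle is making step~1 rigorous. The canonicity hypothesis must be read so that $Q$ is effectively computable from $(K,n)$, and not merely existent. The brute-force search handles this only because $K$ is decidable (assumed throughout \S\ref{sec:MeasuringSU}) and $I_n$ is finite when $\Sigma$ is finite. One subtlety is that several canonical sets of size $\Psi$ may exist. This is harmless, because any fixed choice determines $x$, and the decoder picks one deterministically.

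A second subtlety is that $K$ must be given to $U$ as an object, such as an index of a decider, for the enumeration to run. We take this as the meaning of conditioning on $K$ in the footnote's definition of $\kappa(x\mid K)$. The remaining step is to check that the $O(1)$ term depends only on the reference machine $U$ and not on $K$, $n$ or $\Sigma$ beyond what is supplied in the conditioning. Since $\Sigma$ enters only through $\log|\Sigma|$ and the decoding convention, we assume $\Sigma$ is fixed in the model of \S\ref{sec:SU}.
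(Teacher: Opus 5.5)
Your proof is correct and uses the same construction as the paper: a constant-size decoder that uses $K$ and $n$ to fix $Q$ and the reconstruction, followed by the encoded values $x_Q$ of length about $\Psi(n,K)\log|\Sigma|$. Your version is more careful than the paper's in two places: you make the recovery of $Q$ and the reconstruction effective by exhaustive search over the finite set $I_n$ (using decidability of $K$), where the paper only asserts them, and your single-integer encoding is what actually gives an additive $O(1)$ when $|\Sigma|$ is not a power of two, where the paper simply charges $\log|\Sigma|$ bits per cell.
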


\begin{proof}
	By definition of $\Psi(n,K)$, there exists a set of positions
	$Q \subseteq \{1,\dots,n\}$ with $|Q| = \Psi(n,K)$ such that the property $K$,
	together with the cell values $A[i]_{i\in Q}$, \emph{uniquely determines} the
	entire input $A[1..n]$. Because $K$ is supplied as the auxiliary input, it fixes
	both the structural constraint and the reconstruction rule that completes the
	remaining $n - \Psi(n,K)$ cells.
	
	Consider the program $p$ that, given $K$ and $n$ as auxiliary input, performs the
	following three steps:
	(1) reads the values $A[i]_{i\in Q}$ on the active cells,
	(2) applies the fixed reconstruction procedure determined by $K$ to fill in
		the remaining cells, and 
	(3) outputs the resulting string $x$.
	Each active cell holds a symbol of the alphabet $\Sigma$, hence it can be encoded
	in $\log|\Sigma|$ bits. The program $p$ therefore needs only to carry the active
	data, of length $\Psi(n,K)\cdot\log|\Sigma|$ bits, together with a constant-size
	skeleton (the reconstruction step and the addressing of $Q$ are already
	determined by $K$). Consequently,
$
	\kappa_U(x \mid K) \;\le\; |p| \;=\; \Psi(n,K)\cdot\log|\Sigma| + c
	\;=\; \Psi(n,K)\cdot\log|\Sigma| + O(1),
$
	and since $\kappa(x\mid K)$ is the \emph{minimum} length over all such programs, the
	claimed bound follows.
\end{proof}

The same argument gives a counting (Hartley-entropy) version that needs no reference to Kolmogorov complexity: if the determining set is canonical, then
\begin{equation}
\label{eq:counting}
\bigl|\{x\in I_n : x\models K\}\bigr| \;\le\; |\Sigma|^{\Psi(n,K)},
\end{equation}
because $K$ and the $\Psi(n,K)$ values on $Q$ determine $x$; without canonicity, each $x$ is determined by a pair $(Q,x_Q)$ with $|Q|\le\Psi(n,K)$, so the bound becomes $\sum_{j\le\Psi}\binom{n}{j}|\Sigma|^{j}\le(\Psi+1)\,(n|\Sigma|)^{\Psi}$, with $\Psi=\Psi(n,K)$. In this light, a structural property $K$ acts as a lossless compression of the input space. Bound \eqref{eq:counting} also limits how hard a problem restricted to a property with small $\Psi$ can be (Proposition~\ref{prop:smallpsi}).

\begin{remark}
	The $O(1)$ overhead in Lemma~\ref{lem:compression} relies on the active set $Q$
	being canonical, i.e.\ fixed by $K$ and $n$. In full generality, if $Q$ may
	depend on the instance, specifying it costs an additional
	$\log\binom{n}{\Psi(n,K)}$ bits
	(because the decoder needs to be told which subset was used,
	and there are 
	$\binom{n}{\Psi(n,K)}$  candidates), yielding
	$\kappa(x\mid K,n) \le \Psi(n,K)\log|\Sigma| + \log\binom{n}{\Psi(n,K)} + O(\log n)$, where the $O(\log n)$ term also covers encoding $|Q|$.
Note that this quantifier issue is the information-theoretic counterpart of the find
cost of Section~\ref{sec:ToPsi}: the additional
$\log\binom{n}{\Psi(n,K)}$ bits are what it costs to \emph{describe} which
subset $Q$ was used, whereas $C_{\mathrm{find}}(n,K)$ is what it costs to
\emph{compute} it. 
\end{remark}

\section{\texorpdfstring{The Uncertainty Reduction Theorem}{The Uncertainty Reduction Theorem}}
\label{sec:UncertaintyReductionTheorem}

We now relate the guarded cost under $K$ to the residual uncertainty $\Psi(n,K)$. Once $K$ and the values on a determining set $Q$ are known, the input is fixed, so the remaining computation is a problem on $(n,Q,x_Q)$. This \emph{reduced problem} must be defined explicitly, because in general it is \emph{not} an instance of $P$.

\begin{definition}[Reduced problem]
\label{def:reduced}
Fix a property $K$. For $x\models K$ of size $n$ and a determining set $Q$ of $x$, let $x_Q\in\Sigma^{|Q|}$ be the projection of $x$ onto the cells of $Q$, and let $R_{n,K}(Q,x_Q)=x$ be the reconstruction map, which is well defined by the definition of a determining set. The \emph{reduced problem} $P_K$ takes as input a triple $(n,Q,y)$ for which $R_{n,K}(Q,y)$ is defined, and asks for the answer of $P$ on $R_{n,K}(Q,y)$.
\end{definition}

For example, under $K_{\mathrm{allEqual}}$ the reduced instance of Partition is $(n,\{1\},k)$, and its answer is ``$n$ is even''. This is \emph{not} the answer of Partition on the one-element multiset $\{k\}$ (which is always ``no''): the answer depends on $n$, which is why $n$ is part of the reduced input.

\begin{theorem}[Uncertainty Reduction]
\label{thm:uncertainty}
Let $P$ be a computational problem and $K$ a decidable property. Suppose that
(i)~$K$ can be checked in time $C_{\mathrm{verify}}(n,K)$,
(ii)~for every $x\models K$ of size $n$, a determining set $Q$ of $x$ with $|Q|\le\Psi(n,K)$ can be computed in time $C_{\mathrm{find}}(n,K)$, and
(iii)~an algorithm $\mathcal{B}$ solves the reduced problem $P_K$ in time $T_{\mathcal{B}}(n,m)$ on inputs with $|Q|=m$, where $T_{\mathcal{B}}$ is non-decreasing in $m$.
Let $\mathcal{A}_K$ be the guarded algorithm that checks $K$ (rejecting, or falling back to a general solver, if the check fails), computes $Q$, reads $x_Q$, and runs $\mathcal{B}$ on $(n,Q,x_Q)$. Then $\mathcal{A}_K$ is correct, and for every $x\in I_n$ with $x\models K$,
\begin{equation}
\label{eq:urd-constructive}
T_{\mathcal{A}_K}(x)\;\le\;
C_{\mathrm{verify}}(n,K)+C_{\mathrm{find}}(n,K)+O\bigl(\Psi(n,K)\bigr)+T_{\mathcal{B}}\bigl(n,\Psi(n,K)\bigr).
\end{equation}
Consequently, the guarded cost $W^{\mathrm{guard}}(n,K)\le W_{\mathcal{A}_K}(n,K)$ and the variation $\Delta_{\mathcal{A}_K}(n,K)$ are bounded by the right-hand side of \eqref{eq:urd-constructive}; in the promise setting the term $C_{\mathrm{verify}}(n,K)$ disappears.
\end{theorem}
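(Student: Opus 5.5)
The plan is to analyse $\mathcal{A}_K$ stage by stage, establishing correctness and the running-time bound separately and then reading off the consequences.

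\emph{Correctness.} If the check of $K$ fails, $\mathcal{A}_K$ rejects or falls back to a general solver, so it is correct there by assumption. If $x\models K$, hypothesis~(ii) supplies a determining set $Q$ of $x$. By the definition of a determining set, $R_{n,K}(Q,x_Q)=x$ is well defined, so $(n,Q,x_Q)$ is a legal input of the reduced problem $P_K$ of Definition~\ref{def:reduced}. By construction, its answer is the answer of $P$ on $x$, and hypothesis~(iii) gives that $\mathcal{B}$ returns it.

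\emph{Running time.} Fix $x\in I_n$ with $x\models K$ and charge each stage.
\begin{itemize}
\item The check costs at most $C_{\mathrm{verify}}(n,K)$, since this is the worst case of the fixed decision procedure.
\item Computing $Q$ costs at most $C_{\mathrm{find}}(n,K)$ by (ii), which is a maximum over $x\models K$.
\item Reading $x_Q$ costs $O(|Q|)=O(\Psi(n,K))$ in the unit-cost word RAM, because each cell is read in $O(1)$ time and $|Q|\le\Psi(n,K)$. Writing the triple $(n,Q,x_Q)$ into $\mathcal{B}$'s input format is absorbed in the same term, or already in the find stage, since $Q$ is its output.
\item Running $\mathcal{B}$ costs $T_{\mathcal{B}}(n,|Q|)\le T_{\mathcal{B}}(n,\Psi(n,K))$, using that $T_{\mathcal{B}}$ is non-decreasing in $m$.
\end{itemize}
Summing the four stages gives \eqref{eq:urd-constructive}. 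The only point needing care is the monotonicity step together with $|Q|\le\Psi(n,K)$. Without monotonicity a smaller $Q$ could be costlier, and the bound would have to be stated as $\max_{m\le\Psi}T_{\mathcal{B}}(n,m)$.

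\emph{Consequences.} The right-hand side of \eqref{eq:urd-constructive} does not depend on $x$. Taking the maximum over $x\models K$ therefore bounds $W_{\mathcal{A}_K}(n,K)$, and $W^{\mathrm{guard}}(n,K)\le W_{\mathcal{A}_K}(n,K)$ because $\mathcal{A}_K$ is a guarded algorithm for $K$. The variation satisfies $\Delta_{\mathcal{A}_K}(n,K)\le W_{\mathcal{A}_K}(n,K)$, since running times are nonnegative, so it obeys the same bound. In the promise setting one drops the check stage, so the algorithm runs only the last three stages and the $C_{\mathrm{verify}}$ term disappears.

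I expect no real obstacle. The main subtlety is bookkeeping in the model: we must make sure that the cost of passing $Q$, which contains up to $\Psi$ indices, to $\mathcal{B}$ is covered by the $O(\Psi)$ term or by $C_{\mathrm{find}}$, and is not hidden.
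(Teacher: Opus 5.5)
Your proposal is correct and follows the paper's proof essentially step by step. Correctness comes from $R_{n,K}(Q,x_Q)=x$, the time bound from the four-stage accounting (check, locate, read, solve), and the consequences from maximizing over $x\models K$ together with $\Delta_{\mathcal{A}_K}(n,K)\le W_{\mathcal{A}_K}(n,K)$ by non-negativity; your explicit use of monotonicity of $T_{\mathcal{B}}$ with $|Q|\le\Psi(n,K)$ spells out a step the paper leaves implicit.
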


\begin{proof}
Correctness: for $x\models K$ the check succeeds and, since $Q$ is a determining set, $R_{n,K}(Q,x_Q)=x$; hence the answer of $\mathcal{B}$ on $(n,Q,x_Q)$ is the answer of $P$ on $x$. Inputs that fail the check are handled by the fallback. Running time: the four stages (verification, locating, reading and execution) take at most $C_{\mathrm{verify}}(n,K)$, $C_{\mathrm{find}}(n,K)$, $O(\Psi(n,K))$ (random-access reads) and $T_{\mathcal{B}}(n,\Psi(n,K))$ steps, respectively. Taking the maximum over $x\models K$ bounds $W_{\mathcal{A}_K}(n,K)$, and $\Delta_{\mathcal{A}_K}(n,K)\le W_{\mathcal{A}_K}(n,K)$ because running times are non-negative.
\end{proof}

The theorem is elementary; its value lies in the bookkeeping it enforces. Each of the four terms is a genuine cost that can dominate: the check (\S\ref{sec:ThePartitionProblem}), the location of the residual information (Example~\ref{par:tree}), the reading, and the reduced problem. In particular, the last term depends on $n$ as well as on $\Psi$, and small $\Psi$ does not by itself make it small (Proposition~\ref{prop:smallpsi}).

\begin{corollary}[Projection regime]
\label{cor:projection}
Suppose, in addition, that there is a map $\rho$, computable in time $t_\rho(n)$, that turns every reduced instance $(n,Q,y)$ into an instance of $P$ itself, of size at most $s(\Psi(n,K))$ and with the same answer, and let $\mathcal{S}$ be any algorithm for $P$. Then
$
T_{\mathcal{B}}(n,\Psi(n,K))\;\le\; t_\rho(n)+W_{\mathcal{S}}\bigl(s(\Psi(n,K)),\epsilon\bigr).
$
\end{corollary}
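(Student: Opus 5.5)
The plan is to exhibit an explicit algorithm $\mathcal{B}$ for the reduced problem $P_K$ and bound its running time stage by stage, mirroring the bookkeeping in the proof of Theorem~\ref{thm:uncertainty}. Given a reduced instance $(n,Q,y)$, $\mathcal{B}$ first computes $z=\rho(n,Q,y)$ and then runs $\mathcal{S}$ on $z$, returning its answer. Nothing beyond the hypotheses of the corollary and Definition~\ref{def:complexityUnderK} (with $K=\epsilon$) should be needed.

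\emph{Correctness.} Since $(n,Q,y)$ is a reduced instance, $R_{n,K}(Q,y)$ is defined, and by Definition~\ref{def:reduced} the correct answer is that of $P$ on $R_{n,K}(Q,y)$. By hypothesis, $\rho$ produces an instance $z$ of $P$ with the same answer. The algorithm $\mathcal{S}$ is correct on all inputs of $P$, hence on $z$, so $\mathcal{B}$ outputs the answer of $P_K$.

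\emph{Running time.} Computing $z$ takes at most $t_\rho(n)$ steps. Let $m=|Q|$. The size of $z$ is at most $s(m)\le s(\Psi(n,K))$, where I use $m\le\Psi(n,K)$ from hypothesis~(ii) of the theorem and the monotonicity of $s$. Running $\mathcal{S}$ on $z$ then costs at most $W_{\mathcal{S}}(|z|,\epsilon)$.

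The one step needing care is passing from $W_{\mathcal{S}}(|z|,\epsilon)$ to $W_{\mathcal{S}}(s(\Psi(n,K)),\epsilon)$. The worst-case cost $W_{\mathcal{S}}(\cdot,\epsilon)$ is a maximum over instances of exactly one size, so it need not be monotone in that size. I would handle this as follows:
\begin{itemize}
\item Read $W_{\mathcal{S}}(s,\epsilon)$ as the worst case over instances of size \emph{at most} $s$, i.e.\ $\max_{n'\le s}W_{\mathcal{S}}(n',\epsilon)$.
\item Alternatively, assume, as in hypothesis~(iii) of the theorem, that $W_{\mathcal{S}}$ and $s$ are non-decreasing.
\item Or, if $\rho$ is allowed to pad instances to size exactly $s(\Psi(n,K))$, the padding can be charged to $t_\rho$ and the issue disappears.
\end{itemize}
With any of these conventions, summing the two stages gives
\[
T_{\mathcal{B}}(n,\Psi(n,K))\le t_\rho(n)+W_{\mathcal{S}}\bigl(s(\Psi(n,K)),\epsilon\bigr).
\]
The same bound holds for every $m\le\Psi(n,K)$, which also supplies the monotonicity in $m$ required by hypothesis~(iii) when this $\mathcal{B}$ is plugged back into Theorem~\ref{thm:uncertainty}.
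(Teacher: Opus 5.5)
Your proof is correct and is the intended argument: the paper states the corollary without proof, as the immediate consequence of taking $\mathcal{B}$ to be ``apply $\rho$, then run $\mathcal{S}$'', which is exactly your construction. Your remark that $W_{\mathcal{S}}(\cdot,\epsilon)$ must be read as non-decreasing (or as a maximum over sizes at most $s$) is a convention the paper leaves implicit. Your appeal to the monotonicity of $s$, however, is unnecessary, since the hypothesis already bounds $|\rho(n,Q,y)|$ by $s(\Psi(n,K))$ directly.
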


\noindent When $\Psi$ is bounded by a parameter $k$ of the instance, a map $\rho$ as in Corollary~\ref{cor:projection} plays the role of a \emph{kernelization} with kernel size $s(k)$ \cite{fomin2019kernelization} for the problem restricted to $K$; it is a kernel in the usual sense only if it is defined on all instances and also bounds the parameter of its output. For example, under $K_{\mathrm{allEqual}}$ for Partition, $\rho$ outputs the instance $(k,k)$ if $n$ is even and $(k)$ otherwise, of constant size. A related reduction, measured in variables rather than cells, applies to the property $K_f$ of \S\ref{sec:ETH}: simplifying a 3-CNF formula by the fixed part of the assignment yields an equivalent formula over the at most $f(n)$ free variables.

\ \\
\noindent
{\bf The variant $\Psi_{\mathrm{task}}$.}

	\begin{remark}[Reconstruction vs.\ task resolution]
		\label{rem:recon-vs-task}
		The metric $\Psi(n,K)$ measures the minimum number of cells needed to
		\emph{reconstruct the full input} given $K$. The Uncertainty Reduction
		Theorem uses this as an \emph{upper bound} on the information the solver
		requires: one can always reconstruct the input from the
		$\Psi(n,K)$ determining cells and then solve $P$ on the reconstruction.
		However, the solver for $P$ need not distinguish between inputs that
		yield the same answer. For example, under $K_{\mathrm{allEqual}}$,
		$\Psi(n,K)=1$ (one cell suffices to determine the common value), yet
		the decision version of Partition depends only on the parity of $n$, so the
		common value need not be read at all.
		Many distinct inputs (e.g.\ $(1,1,\ldots,1)$ and $(8,8,\ldots,8)$)
		produce the same answer, and the algorithm need not distinguish them.
		We can therefore define the {\em task-relative variant}, the (adaptive) cell-probe complexity of $P$ under the promise $K$:
		\[
		\Psi_{\mathrm{task}}(n,K,P) \;=\; \min_{\mathcal{D}}\;\max_{x\in I_n,\,x\models K}\;\#\{\text{cells read by } \mathcal{D} \text{ on } x\},
		\]
		where $\mathcal{D}$ ranges over adaptive probing strategies (decision trees) that determine the answer of $P$ on every $x\models K$.
		If the determining set is canonical, then $\Psi_{\mathrm{task}}(n,K,P)\le\Psi(n,K)$, since reading the fixed set $Q$ determines the input and hence the answer. Without canonicity this can fail, because an adaptive strategy must first \emph{locate} a determining set: under $K$ = ``exactly one cell is nonzero'', $\Psi(n,K)=1$ (the nonzero cell determines the input), yet finding that cell, and hence answering a question such as ``what is its value?'', requires up to $n-1$ probes. This gap is the cell-probe counterpart of the find cost $C_{\mathrm{find}}$. Moreover, $\Psi_{\mathrm{task}}(n,K_{\mathrm{allEqual}},\text{Partition})=0$.
		Note that   
		the bound 
		of theorem  \ref{thm:uncertainty}
		remains valid, but it should be read as an upper bound on the
		information the solver consumes, not as a lower bound on the work
		required to solve $P$. A lower bound is given by $\Psi_{\mathrm{task}}$ itself: every algorithm $\mathcal{A}$ that is correct under the promise $K$ reads at least $\Psi_{\mathrm{task}}(n,K,P)$ cells on some instance, so $W^{\mathrm{prom}}_{\mathcal{A}}(n,K)\ge\Psi_{\mathrm{task}}(n,K,P)$.
	\end{remark}

Let us now provide some refinements about the kinds of uncertainty.

\begin{definition}[Kinds of Uncertainty]
	\label{def:activedim}
	We distinguish three notions of ``remaining unknowns.''
	\begin{enumerate}
		\item The \emph{reconstruction metric} $\Psi(n,K)$ counts the minimum
		number of cells whose values, together with $K$, determine the entire
		input (\S\ref{sec:ToPsi}).
		\item The \emph{task-relative metric} $\Psi_{\mathrm{task}}(n,K,P)$, which is at most $\Psi(n,K)$ for canonical determining sets
		(Remark~\ref{rem:recon-vs-task}) counts the minimum number of cells that must be read to
		\emph{decide} $P$.
		\item The \emph{active dimension} $\alpha(P,K)$ is the number of degrees of
		freedom the \emph{core solver} branches over once $K$ is exploited, i.e.\
		the parameter in an FPT-style bound $W(n,K)\le f(\alpha)\cdot n^{O(1)}$.
	\end{enumerate}
	Note that 
	$\alpha(P,K)$ is a property of the \emph{solver}, not of the input, and is
	distinct from the \emph{active cells} (the $\Psi(n,K)$ determining cells read
	during extraction). Two regimes occur. In the \emph{projection regime}
	(Partition under $K_{\mathrm{allEqual}}$) the solver
	operates on the $\Psi(n,K)$-cell projection, so $\alpha(P,K)\le\Psi(n,K)$. In
	the \emph{search-interface regime} (bounded-treewidth Vertex Cover, strong
	backdoors, the free variables of $K_f$ in \S\ref{sec:ETH}, see \S  \ref{sec:MoreCollapse}) the solver reads all $n$ cells 
	($\Psi=\Theta(n)$) yet
	branches only over a structural parameter, such as the treewidth $\tau$ or the
	backdoor size $k$, so $\alpha(P,K)=\tau$ (resp.\ $k$), independent of
	$\Psi(n,K)$.
\end{definition}

\noindent
\textbf{Meaning.} Theorem~\ref{thm:uncertainty} states that, under partial knowledge $K$, the cost of a guarded solver is the cost of checking $K$, plus the cost of locating and reading the $\Psi(n,K)$ cells that remain unknown, plus the cost of a reduced problem on those cells. If $\Psi(n,K)$ is constant \emph{and} the reduced problem is easy (as in the Partition cases), the guarded cost collapses to $O\bigl(C_{\mathrm{verify}}(n,K)+C_{\mathrm{find}}(n,K)\bigr)$, which is polynomial whenever both overheads are polynomial.

\ \\
\noindent
{\bf Verification cost.}
$C_{\mathrm{verify}}(n,K)$ is a first-class cost of the framework and is not bounded by $O(n)$ in general: it depends on the predicate $K$. For the properties used in the case studies of this paper (equality, arithmetic sequence, 2-CNF, Horn, tree, bipartiteness, and the Monge property of a distance matrix) verification takes time linear in the input size (for an $n\times n$ matrix, $O(n^2)$ for $n^2$ cells). Other natural properties are costlier: recognizing perfect graphs takes $O(|V|^9)$ time \cite{chudnovsky2005recognizing}, and checking for a strong backdoor of size $k$ takes time exponential in $k$ (\S\ref{sec:MoreCollapse}). For the small-$\Psi$ properties of this paper, the verification scan also locates $Q$, so $C_{\mathrm{find}}(n,K)$ is absorbed into $C_{\mathrm{verify}}(n,K)$; the two are distinct when locating the determining set requires work beyond checking $K$, as in Example~\ref{par:tree}.

\ \\
\noindent
{\bf Finding cost.}
A second first-class cost is the find cost $C_{\mathrm{find}}(n,K)$ of Definition~\ref{def:findcost}. For the properties with small $\Psi$ used in this paper ($|S|=2$, equality, arithmetic sequence) the determining set is \emph{visible}, hence $C_{\mathrm{find}}(n,K)=O(1)$. For structural properties such as 2-CNF, Horn, tree, bipartiteness, bounded degree or planarity, $\Psi(n,K)=\Theta(n)$: the collapse comes from the active dimension rather than from reading fewer cells, and the find cost plays no role. For the tree-metric property of Example~\ref{par:tree}, $C_{\mathrm{find}}$ is polynomial ($O(n^{2})$--$O(n^{3})$), and for arbitrary properties computing a minimum determining set can be computationally hard. The constructive form of the theorem separates $C_{\mathrm{find}}$ explicitly, so that no hidden cost is swept under the existential definition of $\Psi$.

\ \\
\noindent
{\bf Relation to the classical cost.}
One might object that the framework merely re-expresses the classical cost. The new ingredient is the explicit input scale: without knowledge the information that must be read is the whole input, $\Psi(n,\epsilon)=n$; under $K$ it shrinks to $\Psi(n,K)$ (or to $\Psi_{\mathrm{task}}$), and the remaining cost is carried by the reduced problem. Parameterized complexity already lets complexity depend on a parameter besides $n$, and kernelization already reduces instances to a size that depends only on the parameter. What the present accounting adds is that the parameter is derived from a predicate $K$ on the instance, and that the costs of checking $K$ and of locating the residual information are charged explicitly rather than assumed away.

\section{Parametric Complexity Classes: $\UP[K]$ and $\UNP[K]$}
\label{sec:Parametric}

To formalize these concepts in structural complexity terms, we consider classes that depend on an input property. Throughout this section $K$ is a fixed property that is decidable in polynomial time, and we identify $K$ with the language $\{x : x\models K\}$.

\begin{definition} {\em Uncertainty-Parametric P - $\UP[K]$.}
A language $L \subseteq \Sigma^*$ belongs to the class $\UP[K]$ if there exist a deterministic Turing machine $M$ and a polynomial $p$ such that, for every string $x$ satisfying $K$, $M$ decides whether $x \in L$ within $p(|x|)$ steps.
\end{definition}

\begin{definition} {\em Uncertainty-Parametric NP - $\UNP[K]$.}
A language $L \subseteq \Sigma^*$ belongs to the class $\UNP[K]$ if there exist a non-deterministic Turing machine $M$ and a polynomial $p$ such that, for every string $x$ satisfying $K$, $M$ decides whether $x \in L$ within $p(|x|)$ steps (in the usual non-deterministic sense).
\end{definition}

In both definitions $K$ is fixed in advance and checkable in polynomial time, and $M$ is unconstrained on inputs that violate $K$. Because $K\in P$, the following equivalences are immediate:
\begin{equation}
\label{eq:promise}
L\in\UP[K] \iff L\cap K\in P \iff \bigl(L\cap K,\;K\setminus L\bigr)\ \text{is in promise-}P,
\end{equation}
and similarly $L\in\UNP[K]\iff L\cap K\in NP$. Thus $\UP[K]$ and $\UNP[K]$ are restrictions of languages to polynomial-time promises in the sense of Even, Selman and Yacobi \cite{evenselman1984promise,goldreich2006promise}, and $\UP[\epsilon]=P$, $\UNP[\epsilon]=NP$.

By defining these classes, we can say that an $NP$-complete problem like Partition belongs to $\UP[|S|=2]$ and to $\UP[\text{``allEqual''}]$. A problem's hardness thus becomes a function of the available structural knowledge $K$ rather than a single label.

\paragraph{Coverage.} A statement of the form ``$L\in\UP[K]$ for some $K$'' is by itself uninformative: every decidable $L$ belongs to $\UP[K]$ for, e.g., every finite $K$. What matters is how much of the instance space $K$ covers. When $I_n$ is finite, we define the \emph{coverage} of $K$ at size $n$ as
\[
\mu_n(K)=\frac{|\{x\in I_n: x\models K\}|}{|I_n|},
\qquad\text{or}\qquad
\mu_n^{\mathcal{D}}(K)=\Pr_{x\sim\mathcal{D}_n}[x\models K]
\]
for a distribution $\mathcal{D}=\{\mathcal{D}_n\}$ of interest. A statement $L\in\UP[K]$ is informative to the extent that $K$ is natural for the application and has non-negligible coverage; properties whose coverage tends to $1$ make $L$ \emph{generically} polynomial in the sense of generic-case complexity \cite{kapovich2003generic}.

We can now define the \textit{Uncertainty Sensitivity Index (USI)}.

\begin{definition}
The \textit{Uncertainty Sensitivity Index (USI)} of a problem $P$ with respect to a family of properties $\mathcal{K}$, fixed a priori, is the function
\begin{equation*}
\text{USI}(P, \mathcal{K})(n) = \min \{ \Psi(n, K) : K \in \mathcal{K} \text{ such that } P \in \UP[K] \}.
\end{equation*}
\end{definition}

This index measures the smallest residual uncertainty that a property of the family can leave while making the problem polynomial. The family must be fixed in advance and should exclude degenerate properties: if $\mathcal{K}$ contains singleton properties $\{x_0\}$, then $\text{USI}=0$ for every problem. A natural choice is to restrict $\mathcal{K}$ to properties whose coverage is at least some threshold $\delta(n)$.

\paragraph{What small residual uncertainty implies, and what it does not.} The counting bound \eqref{eq:counting} limits how hard a problem restricted to a property with small $\Psi$ can be, but it does not make the problem easy.

\begin{proposition}
\label{prop:smallpsi}
Let the cell alphabet $\Sigma$ have constant size, let $L\in NP$, and let $K$ be a property decidable in polynomial time.
\begin{enumerate}
\item If $\Psi(n,K)=O(1)$, or if $\Psi(n,K)=O(\log n)$ with canonical determining sets, then $L\cap K$ is sparse. Hence, by Mahaney's theorem \cite{mahaney1982sparse}, $L\cap K$ is not NP-complete (equivalently, since it is in NP, not NP-hard under polynomial-time many-one reductions) unless $P=NP$.
\item Small $\Psi$ does not imply tractability: if $E\neq NE$, there exist $L\in NP$ and $K$ with $\Psi(n,K)=0$ such that $L\notin\UP[K]$.
\end{enumerate}
\end{proposition}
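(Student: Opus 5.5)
For part~1 we will use the counting bound \eqref{eq:counting} and its non-canonical form. With canonical determining sets, $|\{x\in I_n: x\models K\}|\le|\Sigma|^{\Psi(n,K)}$. When $\Psi(n,K)=O(\log n)$ and $|\Sigma|$ is constant, this is $|\Sigma|^{c\log n}=n^{c\log|\Sigma|}$, which is polynomial in $n$. When $\Psi(n,K)=O(1)$ we drop canonicity and use the non-canonical bound $(\Psi+1)(n|\Sigma|)^{\Psi}$, which is polynomial in $n$ because $\Psi$ is a constant. In both cases $K$ contains only polynomially many instances of each size, so $L\cap K\subseteq K$ is sparse.

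Sparseness in the usual sense counts strings of bit length at most $m$. Since $\Sigma$ has constant size, the bit length is $\Theta(n)$, and summing the per-size bounds over sizes up to $m$ gives a polynomial. Membership $L\cap K\in NP$ holds because $L\in NP$ and $K\in P$. Mahaney's theorem says that if some sparse set is NP-hard under many-one reductions, then $P=NP$. So $L\cap K$ is not NP-complete unless $P=NP$; for a set already in NP, being NP-hard is the same as being NP-complete. This part is routine.

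For part~2 we will build a tally-style language from a set in $NE\setminus E$. Take $U\in NE\setminus E$ over $\{0,1\}$; we may assume its strings have no leading zeros, so we can read them as natural numbers via binary encoding $\mathrm{bin}(m)$. The inputs of size $n$ are the words in $\Sigma^n$. Let $K$ be the property ``every cell holds the symbol $0$''. This is decidable in linear time, and it has exactly one instance per size, namely $0^n$. Hence $\Psi(n,K)=0$: the empty set determines that instance. Let $L=\{0^n : \mathrm{bin}(n)\in U\}$. Then $L\in NP$: on input $0^n$ we run the $NE$ machine for $U$ on $\mathrm{bin}(n)$, which has length about $\log n$, so the running time $2^{O(\log n)}$ is polynomial in $n$. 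On inputs not of the form $0^n$ we reject.

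Suppose, for contradiction, that $L\in\UP[K]$. Then some deterministic machine decides, for each $x\models K$, whether $x\in L$ in polynomial time. Every $x=0^n$ satisfies $K$, so this gives a deterministic $\mathrm{poly}(n)$ test of $\mathrm{bin}(n)\in U$. For an input $w$ of length $\ell$ we write $0^{\mathrm{val}(w)}$ in time $2^{O(\ell)}$ and run the test, which puts $U$ in $E$, a contradiction.

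The main obstacle is the encoding detail: the value $\mathrm{val}(w)$ must determine $w$ uniquely, which is why we handle leading zeros, for instance by prefixing a $1$. We must also make sure that the cell alphabet and the choice of $I_n$ allow the single instance $0^n$ at every size. Beyond that, we only need to observe that $\UP[K]$ for this $K$ coincides with deciding $L$ on the tally strings.
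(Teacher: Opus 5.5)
Your proof is correct and follows the paper's route. Part~1 is the same counting argument via \eqref{eq:counting} and its non-canonical form, followed by Mahaney's theorem. Part~2 uses the same construction: a tally language together with $K=\{0\}^*$, so that $\Psi(n,K)=0$.

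The only difference is that you prove inline the direction of Book's theorem you need. You pad a set in $NE\setminus E$ into a tally set in $NP\setminus P$, prefixing a $1$ so that the binary encoding is injective. The paper instead simply cites Book's characterization: $E=NE$ if and only if every tally set in $NP$ is in $P$.
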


\begin{proof}
(1) By \eqref{eq:counting} and the remark following it, the number of instances of size $n$ satisfying $K$ is at most $(\Psi+1)(n|\Sigma|)^{\Psi}$, which is polynomial in $n$ when $\Psi=O(1)$; with canonical determining sets it is at most $|\Sigma|^{\Psi}$, which is polynomial when $\Psi=O(\log n)$. Hence $L\cap K$, which is in NP because $K\in P$, has polynomially many strings of each length, i.e.\ it is sparse. Mahaney's theorem states that if a sparse set is NP-complete under polynomial-time many-one reductions, then $P=NP$.
(2) Book \cite{book1974tally} proved that $E=NE$ iff every tally set in NP is in P. If $E\neq NE$, let $T\subseteq\{0\}^*$ be a tally set in $NP\setminus P$, and let $L=T$ and $K=\{0\}^*$. Every $x\models K$ of length $n$ equals $0^n$, so $\Psi(n,K)=0$; and $L\in\UP[K]$ would imply $T\in P$.
\end{proof}

Part~(1) applies to properties whose instances are essentially determined by their size, such as the arithmetic sequence $s_i=i$, whose restriction of Partition has one instance per size. It does not apply to Partition under all-equal or $|S|=2$ when cells hold arbitrary integers: with $w$-bit words there are about $2^{w}$ choices for the common value, so these restricted problems are not sparse when measured in bits, even though $\Psi$ is constant. Part~(2) shows that, in Theorem~\ref{thm:uncertainty}, the reduced problem can carry all of the difficulty even when nothing remains to be read. The Partition cases are easy because their reduced problems (the parity of $n$, or $n \bmod 4$) happen to be easy, not merely because $\Psi$ is small.

\section{NP Problems Sensitive and Insensitive to Uncertainty}
\label{sec:NPProblemsSensitiveInsensitive}

\subsection{Problems Sensitive to Uncertainty}

Several fundamental computational problems in different domains exhibit high Sensitivity to Uncertainty (SU). 
Within our formalized framework, these problems are characterized by a large gap between their unrestricted worst-case cost $W(n,\epsilon)$ and their restricted worst-case cost $W(n,K)$ under specific structural properties $K$.

Below is one highly illustrative computational problem analyzed through the lens of our $(W(n, K), \Psi(n, K))$ framework.

\subsubsection{Graph Isomorphism (Planarity and Bounded Degree)}
The general Graph Isomorphism (GI) problem asks if two graphs of size $n$ are structurally identical. 
Its general worst-case complexity is quasi-polynomial, $2^{O(\log^c n)}$ (Babai’s algorithm \cite{Babai2015GraphIsomorphism,Babai2016GraphII}).
However, the complexity of GI is highly sensitive to the topological properties of the input graphs. 
Three classical structural properties illustrate this:
\begin{itemize}
	\item Property $K_{\text{planar}}$: The input graphs are planar.
			For $K_{\text{planar}}$, Hopcroft and Wong showed \cite{Hopcroft1974LinearTA} that planar graph isomorphism is solvable in {\em linear time $O(n)$}.
			Therefore 
			$
			\text{GI} \in \UP[K_{\text{planar}}]
			$.
			
	\item Property $K_{\text{tree}}$: The input graphs are acyclic and connected (trees).
	For $K_{\text{tree}}$, the tree isomorphism problem can be solved via the AHU (Aho, Hopcroft, Ullman) algorithm \cite{Aho1974TheDA} in {\em linear time $O(n)$}.
	Therefore 
	$
\text{GI} \in \UP[K_{\text{tree}}]
	$.
	
	\item Property $K_{\text{deg}}$: The maximum vertex degree is bounded by a constant $d$.
	For $K_{\text{deg}}$, Luks' group-theoretic algorithm \cite{Luks1982IsomorphismOG} solves the problem in polynomial time {\em $O(n^{O(d)})$},
	and if $d$ is constant, GI collapses to $P$. 
\end{itemize}





Table~\ref{tab:complexity_metrics} summarizes these properties together with the Partition cases, and it shows two different mechanisms. For Partition under $|S|=2$ or all-equal, the residual uncertainty $\Psi(n,K)$ is constant and the promise cost is $O(1)$. For Graph Isomorphism, in contrast, planarity, acyclicity or bounded degree leave the input essentially undetermined, $\Psi(n,K)=\Theta(n)$: the collapse is obtained not by reading less of the input but by exploiting its structure, i.e.\ through the active dimension of the search (Definition~\ref{def:activedim}).

	\begin{table}[htbp]
		\centering
		\caption{Complexity analysis and uncertainty metrics under property $K$ ($n$: number of cells; costs in the promise setting, the guarded cost adds $C_{\mathrm{verify}}$, here $O(n)$).}
		\label{tab:complexity_metrics}
		\footnotesize 
		\begin{tabular}{|p{18mm}|p{16mm}|p{22mm}|p{18mm}|p{20mm}|p{34mm}|}\hline
			\textbf{Problem $P$} & \textbf{General Cost ($K = \epsilon$)} & \textbf{Property $K$} & \textbf{Residual uncertainty $\Psi(n, K)$} & \textbf{Conditional cost} & \textbf{Mechanism / class} \\\hline
			Partition & $2^{O(n)}$ & $|S| = 2$ or all-equal & $\le 2$ & $O(1)$ & residual uncertainty; $\UP[K]$  \\\hline
			Graph Isomorphism & $2^{O(\log^c n)}$ & planar graphs & $\Theta(n)$ & $O(n)$ & structure; $\UP[K]$  \\\hline
			Graph Isomorphism & $2^{O(\log^c n)}$ & trees & $\Theta(n)$ & $O(n)$ & structure; $\UP[K]$  \\\hline
			Graph Isomorphism & $2^{O(\log^c n)}$ & maximum degree $\le d$ & $\Theta(n)$ & $n^{O(d)}$ & structure; $\UP[K]$ for constant $d$  \\\hline
		\end{tabular}
	\end{table}

\subsection{\texorpdfstring{Knowledge-Resistant Problems}{Knowledge-Resistant Problems}}
\label{sec:ReallyHard}
Are there problems whose cost is unaffected by \emph{any} checkable knowledge about the input? For undecidable problems such as the Halting Problem \cite{Turing1937Computable} no algorithm exists at all, and restricting them to a decidable property $K$ can make them decidable (e.g.\ programs without loops), so they are not the interesting case. The interesting candidates are the algebraic problems underlying cryptography, such as the
\textit{Discrete Logarithm Problem}\footnote{The Discrete Logarithm Problem (DLP) is formulated as follows: let $G$ be a finite cyclic group of order $q$ with generator $g \in G$. Given an element $y \in G$, find the unique integer $x \in \{0, \dots, q-1\}$ such that $g^x = y$ (denoted $x = \log_g y$). While computing $g^x$ given $x$ (modular exponentiation) is highly efficient, requiring only $O(\log q)$ multiplications, solving the inverse mapping is classically hard. In generic groups (such as elliptic curve groups), the best classical algorithms require exponential time $O(\sqrt{q})$, whereas in multiplicative groups of finite fields $\mathbb{F}_p^*$, index-calculus methods run in sub-exponential time.} \cite{Odlyzko1984DiscreteLI} and \textit{Integer Factorization}\footnote{
	The Integer Factorization Problem (IFP) is formulated as follows: given a composite integer $N$, find its unique prime factors (i.e., write $N = p_1^{a_1} p_2^{a_2} \dots p_k^{a_k}$ where each $p_i$ is a distinct prime). In cryptographic systems (such as RSA), $N$ is chosen as the product of two large, distinct secret primes $p$ and $q$. While verifying a candidate factorization via multiplication is trivial, requiring $O(\log^2 N)$ steps using standard arithmetic, the inverse factorization mapping is classically hard. The fastest known general-purpose classical algorithm is the General Number Field Sieve (GNFS), which operates in sub-exponential time $O(\exp(((\frac{64}{9})^{1/3} + o(1))(\ln N)^{1/3}(\ln \ln N)^{2/3}))$.}.

It is tempting to say that these problems are uniformly hard, i.e.\ that ``almost all instances are hard'' and the cost variation is close to zero. This is not correct, and the reason is instructive. Many checkable properties make factoring easy. If $N$ is prime (checkable in polynomial time \cite{agrawal2004primes}) or a perfect power, the factorization is immediate; if $N$ has a prime factor below a bound polynomial in $\log N$, trial division finds it; if $N=pq$ with $|p-q|$ at most about $N^{1/4}$, Fermat's method succeeds after a constant number of iterations; if $p-1$ is $B$-powersmooth, Pollard's $p-1$ method runs in time polynomial in $B$ and $\log N$ \cite{pollard1974theorems}; the elliptic-curve method finds a prime factor $p$ in time subexponential in $\log p$ \cite{lenstra1987factoring}; and if, in addition to $N$, half of the bits of $p$ are known, $N=pq$ can be factored in polynomial time \cite{coppersmith1997small}. The last example is not a property of $N$ but partial information about the solution, a closely related form of knowledge that collapses the cost just as effectively. Except for the elliptic-curve case, each of the properties of $N$ above can be checked by running the corresponding method for a polynomial number of steps. Moreover, half of all integers are even. The cost variation of any reasonable factoring algorithm is therefore large, not close to zero.

What is believed to be hard is factoring \emph{typical balanced semiprimes}, which form a vanishing fraction of all integers. The right formalization of resistance to knowledge therefore involves the coverage of $K$ with respect to the distribution of interest.

\begin{proposition}[Knowledge resistance of factoring]
\label{prop:factoring}
Let $\mathcal{D}=\{\mathcal{D}_n\}$ be the distribution of $N=pq$ for independent uniformly random $(n/2)$-bit primes $p,q$, and assume the standard factoring assumption: every probabilistic polynomial-time algorithm factors $N\sim\mathcal{D}_n$ with probability negligible in $n$. Let $K$ be a property decidable in polynomial time such that some polynomial-time algorithm factors every $N\models K$. Then the coverage $\mu^{\mathcal{D}}_n(K)$ is negligible.
\end{proposition}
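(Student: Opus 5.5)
The proof is a direct reduction: from $K$ and its fast factoring algorithm we build a single probabilistic polynomial-time factoring algorithm whose success probability on $\mathcal{D}_n$ is at least $\mu^{\mathcal{D}}_n(K)$. The factoring assumption then forces the coverage to be negligible.

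Fix a deterministic polynomial-time decision procedure $M_K$ for $K$, with running time $q_1(n)$. Fix also an algorithm $F$ and a polynomial $q_2$ such that $F$ outputs the factorization of every $N\models K$ within $q_2(n)$ steps. Define the algorithm $\mathcal{A}$ as follows.
\begin{enumerate}
\item On input $N$, run $M_K$ on $N$.
\item If $M_K$ rejects, output $\bot$.
\item If $M_K$ accepts, run $F$ on $N$, truncated after $q_2(n)$ steps.
\item Verify the candidate output by multiplication, which costs $O(n^2)$. Output it if it is a correct nontrivial factorization, and $\bot$ otherwise.
\end{enumerate}
The verification step guarantees that $\mathcal{A}$ is never wrong. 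The truncation guarantees a worst-case bound of $q_1(n)+q_2(n)+O(n^2)$ on every input, including inputs that violate $K$. This matters because $F$ is only promised to be fast on inputs satisfying $K$, in line with the promise/guarded distinction of Definition~\ref{def:promiseguarded}. So $\mathcal{A}$ is a (deterministic, hence probabilistic) polynomial-time algorithm.

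For the analysis, observe that for every $N$ in the support of $\mathcal{D}_n$ with $N\models K$, $\mathcal{A}$ factors $N$. Therefore
\[
\Pr_{N\sim\mathcal{D}_n}\bigl[\mathcal{A}\text{ factors }N\bigr]\;\ge\;\Pr_{N\sim\mathcal{D}_n}[N\models K]\;=\;\mu^{\mathcal{D}}_n(K).
\]
The factoring assumption says the left-hand side is negligible in $n$. Hence $\mu^{\mathcal{D}}_n(K)$ is bounded by a negligible function, and so is itself negligible.

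The only delicate point is the bookkeeping in the first paragraph: making $\mathcal{A}$ uniformly polynomial on all inputs rather than only on those satisfying $K$. The property $K$ is identified with a language, and the input size here is the bit length of $N$, which is about $n$. If $N$ occasionally has fewer than $n$ bits, the polynomial bounds in the bit length are still polynomial in $n$.

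One remark to add afterwards: the same argument gives the contrapositive reading. Any checkable property that makes factoring polynomial on $\mathcal{D}$ with coverage $1/\mathrm{poly}(n)$ would break the factoring assumption. This is consistent with the examples listed before the proposition, such as small factors, $|p-q|\le N^{1/4}$, and smooth $p-1$, all of which have negligible mass under $\mathcal{D}_n$.
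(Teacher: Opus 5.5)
Your proof is correct and is the same reduction as the paper's: check $K$, then run the fast factoring algorithm, so the success probability on $\mathcal{D}_n$ is at least $\mu^{\mathcal{D}}_n(K)$. You argue directly (a function dominated by a negligible one is negligible) rather than by contradiction, which is equivalent; your truncation and verification steps are harmless but unnecessary, since $F$ is only ever invoked after $M_K$ has accepted, i.e.\ on inputs satisfying $K$.
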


\begin{proof}
Otherwise the algorithm ``check $K$; if it holds, run the polynomial-time factoring algorithm for $K$'' runs in polynomial time and factors $N\sim\mathcal{D}_n$ with probability at least $\mu^{\mathcal{D}}_n(K)$, which is non-negligible, contradicting the assumption.
\end{proof}

The proposition is simple, but it captures exactly the sense in which factoring is \emph{knowledge-resistant}: easy subclasses exist in abundance, yet, under a standard cryptographic assumption, none of them carries non-negligible mass under the distribution that matters. For the discrete logarithm in a fixed group, random self-reducibility \cite{blum1984prob,feigenbaum1991random} makes worst-case and average-case hardness coincide within the group; for factoring no comparable worst-case to average-case reduction is known. Note that random self-reducibility equates the \emph{worst} case with the \emph{average} case, not with the best case: trivial instances (such as $y=g$) still exist, so it does not imply a small cost variation.

Now consider the SUM problem of \S\ref{sec:SU}, which is NSU and easy. NSU-ness and hardness are orthogonal: SUM is NSU and easy; Partition is sensitive and, in the worst case, hard; factoring is sensitive (it has many easy instances) yet, conditionally, knowledge-resistant in the coverage sense. (Both sensitivity claims follow from Proposition~\ref{prop:fastpath}, unless the problems are solvable in linear time.) What separates easy from hard is therefore not the variation $\Delta$ alone, but the joint profile of the restricted costs $W(n,K)$, the residual uncertainty $\Psi(n,K)$, and the coverage of the properties that achieve them.

\subsection{Formal Bridge to Parameterized Complexity}
\label{sec:BridgeToParameterizedComplexity}

To anchor the parametric complexity classes $\UP[K]$ and $\UNP[K]$ within the established taxonomy of theoretical computer science, we formalize their structural relationship to standard \emph{Parameterized Complexity} (the Downey-Fellows framework) \cite{downey1999}. 

In traditional parameterized complexity, a parameterized language is defined as a subset $L \subseteq \Sigma^* \times \mathbb{N}$, where the second coordinate represents a numerical parameter $k$. The class $FPT$ (Fixed-Parameter Tractable) contains all parameterized languages that can be decided by a deterministic Turing machine in time
$
O(f(k) \cdot n^c)
$
where $f$ is an arbitrary computable function of the parameter $k$, $n$ is the input size, and $c$ is a constant independent of $k$.

A single membership $L\in\UP[K]$ is a much weaker statement than fixed-parameter tractability: it concerns one property, whereas FPT is a statement about a whole family of properties, uniformly. The precise relation is as follows. For a parameterization $\kappa:\Sigma^*\to\mathbb{N}$ computable in polynomial time, let $K_k=\{x:\kappa(x)\le k\}$. This is a graded family of properties with $K_0\subseteq K_1\subseteq\cdots$ and $\bigcup_k K_k=\Sigma^*$.

\begin{proposition}[FPT as uniform membership in a graded family]
\label{thm:bridge}
Let $L\subseteq\Sigma^*$ and let $\kappa$ be as above.
\begin{enumerate}
\item $(L,\kappa)\in FPT$ iff there exist a single algorithm $\mathcal{A}$, a computable function $f$ and a constant $c$ such that, for every $k$, $\mathcal{A}$ decides $L$ on all $x\models K_k$ within $f(k)\cdot|x|^{c}$ steps (so that $\mathcal{A}$ witnesses $L\in\UP[K_k]$ for every $k$, with degree independent of $k$).
\item $(L,\kappa)\in XP$ iff the same holds with the bound $|x|^{f(k)}$.
\end{enumerate}
\end{proposition}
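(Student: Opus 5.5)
The plan is to prove both parts by unwinding definitions. The only real content is how the parameter value $\kappa(x)$ relates to the index $k$ of the graded family $K_k$. Here $(L,\kappa)\in FPT$ means that some algorithm decides $x\in L$ within $f(\kappa(x))\cdot|x|^{c}$ steps on every input $x$. This is the standard Flum--Grohe formulation with a polynomial-time parameterization; the pair-language formulation of Downey--Fellows reduces to it by taking $\kappa(x,k)=k$.

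For the forward direction of (1), take the FPT algorithm $\mathcal{A}$ and first replace $f$ by its monotone majorant $\hat f(k)=\max_{j\le k} f(j)$. This function is still computable and satisfies $\hat f\ge f$. Then, for every $k$ and every $x\models K_k$, we have $\kappa(x)\le k$, so
\[
T_{\mathcal{A}}(x)\le f(\kappa(x))\,|x|^{c}\le \hat f(k)\,|x|^{c}.
\]
The same $\mathcal{A}$ therefore works for every $k$, with degree $c$ independent of $k$. For each fixed $k$ the factor $\hat f(k)$ is a constant, so $\mathcal{A}$ witnesses $L\in\UP[K_k]$ in the sense of the definition of $\UP[K]$ (polynomial $p(n)=\hat f(k)n^{c}$).

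For the converse, suppose a single $\mathcal{A}$ decides $L$ on $K_k$ within $f(k)|x|^{c}$ steps for every $k$. Every $x$ lies in $K_{\kappa(x)}$, because $\kappa(x)\le\kappa(x)$. Hence on every input $\mathcal{A}$ is correct and runs in time $f(\kappa(x))|x|^{c}$, which is exactly the FPT bound. Computing $\kappa$ is not even needed, since $\mathcal{A}$ is correct on all inputs; the covering $\bigcup_k K_k=\Sigma^*$ is what guarantees that.

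Part (2) is handled identically with $|x|^{f(k)}$ in place of $f(k)|x|^c$. Monotonization again lets us pass from $f(\kappa(x))$ to $\hat f(k)$ for $x\models K_k$, and in the converse $x\in K_{\kappa(x)}$ gives the bound $|x|^{f(\kappa(x))}$. One edge case needs attention: $|x|\le 1$ makes $|x|^{f(k)}$ degenerate. I would absorb this by reading the bounds as $O(\cdot)$, or by using $(|x|+2)^{f(k)}$, which changes nothing asymptotically.

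The main obstacle is the definitional subtlety in the forward direction: $f$ need not be monotone, and the uniform bound over $K_k$ must use the worst value of $f$ on $\{0,\dots,k\}$. The monotone majorant handles this and preserves computability. A secondary point is ensuring the FPT definition used runs the algorithm on all inputs, not only on those with a given parameter; the equivalence with the pair formulation makes this harmless. I would state both conventions explicitly at the start of the proof.
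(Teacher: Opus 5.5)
Your proposal is correct and follows the paper's proof essentially verbatim. The forward direction bounds the running time on $K_k$ using the computable monotone majorant $\hat f(k)=\max_{j\le k}f(j)$, the converse uses $x\models K_{\kappa(x)}$ to recover the bound $f(\kappa(x))\,|x|^{c}$ on every input, and part (2) is handled identically. Your extra remarks on the FPT convention and on the degenerate case $|x|\le 1$ for the XP bound are harmless refinements that the paper leaves implicit.
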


\begin{proof}
(1) If $\mathcal{A}$ decides $L$ in time $f(\kappa(x))\,|x|^c$, then on every $x\models K_k$ it runs in time at most $\hat f(k)\,|x|^c$, where $\hat f(k)=\max_{j\le k}f(j)$ is computable. Conversely, if $\mathcal{A}$ satisfies the uniform bound for every $k$, then, since every $x$ satisfies $K_{\kappa(x)}$, $\mathcal{A}$ decides every $x$ correctly within $f(\kappa(x))\,|x|^c$ steps. (2) is identical, with $|x|^{f(k)}$ in place of $f(k)|x|^c$.
\end{proof}

Membership of $L$ in $\UP[K_k]$ for every fixed $k$ is therefore not enough for FPT. For $p$-Clique, parameterized by the size $k$ of the requested clique, brute force gives $L\in\UP[K_k]$ for every $k$ (in time $O(n^{k+2})$), yet $p$-Clique is W[1]-hard and hence not in FPT unless $FPT=W[1]$ \cite{downey1999,cygan2015parameterized}. The uncertainty framework is thus more general than FPT in one respect (it allows arbitrary predicates $K$, not only thresholds of a numerical parameter) and weaker in another (a single statement $L\in\UP[K]$ carries no uniformity).

Where does $\Psi$ enter? If, for the instances satisfying $K$, a determining set and its values can be computed in polynomial time, and the reduced problem maps back to an equivalent instance of $P$ of size $s(\Psi)$ (Corollary~\ref{cor:projection}), then this map behaves like a kernelization of the problem restricted to $K$, with kernel size $s(\Psi)$. If, moreover, it extends to all instances and bounds the parameter of its output, it is a kernel in the usual sense, and since a decidable parameterized problem is in FPT iff it admits a kernel \cite{cygan2015parameterized,fomin2019kernelization}, the problem is fixed-parameter tractable with respect to any parameter that bounds $\Psi$. Without such an extension the conclusion applies only to the restricted problem.

	\begin{remark}[Two regimes]
		\label{rem:bridge-regimes}
		In the \emph{projection regime} the core solver operates on the $\Psi(n,K)$ determining cells, and the relevant parameter is $\Psi$. In the \emph{search-interface regime} the solver reads all $n$ cells but branches only over the active dimension $\alpha(P,K)$ of Definition~\ref{def:activedim}, giving $W(n,K)\le f(\alpha(P,K))\cdot n^{O(1)}$; bounded-treewidth Vertex Cover ($\alpha=\tau$), strong backdoors ($\alpha=k$) and the free variables of $K_f$-instances (\S\ref{sec:ETH}) belong to this regime. In both regimes the statement is of FPT type, in the sense of Proposition~\ref{thm:bridge}, when the bound holds uniformly over the corresponding graded family of properties.
	\end{remark}

\subsection{\texorpdfstring{A Free-Variable Threshold for SAT}{A Free-Variable Threshold for SAT}}
\label{sec:ETH}

A key question is whether $\UP[K]$ and $\UNP[K]$ are distinct for natural properties $K$. Under $P\neq NP$ we have $\UP[\epsilon]\subsetneq\UNP[\epsilon]$, since $\UP[\epsilon]=P$ and $\UNP[\epsilon]=NP$. The interesting case is a property that leaves an unbounded but sub-linear amount of freedom. We consider satisfiability with a partially fixed assignment.

Let $L$ be the language of pairs $(F,\rho)$ such that $F$ is a 3-CNF formula\footnote{
	The 3-Satisfiability (3-SAT) problem is a canonical NP-complete decision problem. Let $V = \{x_1, x_2, \dots, x_n\}$ be a set of Boolean variables. A literal is a variable $x_i$ or its negation $\neg x_i$. A 3-CNF formula $F$ is a conjunction (AND) of clauses, $F = C_1 \land C_2 \land \dots \land C_m$, where each clause $C_i$ is a disjunction (OR) of exactly three literals (e.g., $C_i = l_{i,1} \lor l_{i,2} \lor l_{i,3}$). The problem asks whether there exists a truth assignment $\tau: V \to \{\text{True}, \text{False}\}$ that satisfies $F$, i.e., makes the entire formula evaluate to True.
} over a variable set $V$ with $|V|=n$, $\rho$ is a partial assignment to $V$ given explicitly as part of the input, and $F$ has a satisfying assignment that extends $\rho$. For a function $f$, let $K_f$ be the property ``at most $f(n)$ variables are left unassigned by $\rho$''. The property $K_f$ is checkable in time linear in the input size (count the unassigned variables). Note that $\Psi(\cdot,K_f)$ is \emph{not} $f(n)$: the clauses of $F$ are unconstrained by $K_f$ and must be read, so the residual uncertainty is of the order of the input size. What $K_f$ restricts is the \emph{active dimension} (Definition~\ref{def:activedim}): a solver needs to branch only over the at most $f(n)$ free variables.

We use the \emph{Exponential Time Hypothesis (ETH)} of Impagliazzo, Paturi, and Zane \cite{impagliazzo2001}: 3-SAT over $m$ variables cannot be solved by any deterministic algorithm in time $2^{o(m)}$.

\begin{theorem}[Free-variable threshold]
\label{thm:separation}
\begin{enumerate}
\item If $f(n)=O(\log n)$, then $L\in\UP[K_f]$.
\item Let $\gamma\in(0,1)$ be a constant and $f(n)=n^{\gamma}$. Then $L\in\UNP[K_f]$, and $L\notin\UP[K_f]$ unless $P=NP$.
\item Under ETH, for $f(n)=n^{\gamma}$ with a constant $\gamma\in(0,1)$, no algorithm decides $L$ on all inputs satisfying $K_f$ in time $2^{o(n^{\gamma})}\cdot\mathrm{poly}(|x|)$.
\end{enumerate}
\end{theorem}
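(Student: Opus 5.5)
Part~1 is direct. On input $(F,\rho)$ satisfying $K_f$ with $f(n)=O(\log n)$, we enumerate all $2^{f(n)}=n^{O(1)}$ extensions of $\rho$ to the free variables. For each extension we evaluate $F$ in time linear in $|F|$ and accept iff some extension satisfies $F$. The total time is $n^{O(1)}\cdot O(|x|)$, which is polynomial in the input size $|x|\ge n$. So $L\cap K_f\in P$, and by \eqref{eq:promise} $L\in\UP[K_f]$.

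Part~2 has two halves. Membership $L\in\UNP[K_f]$ is immediate, because $L\in NP$: guess the values of the free variables and check $F$.

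For the lower bound I would reduce 3-SAT to $L\cap K_f$ by padding. Given a 3-CNF $G$ over $m$ variables, set $n=\lceil m^{1/\gamma}\rceil$, so that $n^{\gamma}\ge m$. Form $F=G$ over $V=\{x_1,\dots,x_n\}$, where the $n-m$ extra variables either occur in no clause or, if the encoding requires every variable to occur, occur only in harmless satisfiable clauses such as $(y\lor y\lor\neg y)$. Let $\rho$ assign the extra variables arbitrarily and leave the $m$ original variables free. Then $(F,\rho)\models K_f$ and $(F,\rho)\in L$ iff $G$ is satisfiable. Since $\gamma$ is a constant, $n=O(m^{1/\gamma})$ is polynomial, so the reduction runs in polynomial time. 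Hence a polynomial-time decider for $L$ on $K_f$ would give $3\text{-SAT}\in P$, i.e.\ $P=NP$.

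Part~3 reuses the same padding. An algorithm running in time $2^{o(n^{\gamma})}\cdot\mathrm{poly}(|x|)$ on $K_f$-inputs would decide 3-SAT on $m$ variables in time $2^{o(n^{\gamma})}\cdot\mathrm{poly}(m^{1/\gamma}+|G|)$. Since $n^{\gamma}=\Theta(m)$, the exponential factor is $2^{o(m)}$.

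The main obstacle is the polynomial factor in $|G|$: ETH forbids $2^{o(m)}$ measured in $m$ only, and $|G|$ could be as large as $\Theta(m^3)$. I would handle this first by noting that a 3-CNF over $m$ variables has at most $O(m^3)$ distinct clauses after removing duplicates, so $\mathrm{poly}(|G|)=2^{O(\log m)}=2^{o(m)}$. Alternatively, I would invoke the Sparsification Lemma, which lets one assume $O(m)$ clauses. Either route yields a $2^{o(m)}$ algorithm for 3-SAT, contradicting ETH.

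A minor point to check is that $n^{\gamma}$ is at least $m$ and not much larger, for which the rounding $n=\lceil m^{1/\gamma}\rceil$ suffices.
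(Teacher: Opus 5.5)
Your proof is correct and takes essentially the same route as the paper. Part~1 enumerates the $2^{f(n)}=n^{O(1)}$ extensions of $\rho$, and parts~2 and~3 pad a 3-CNF over $m$ variables with $\lceil m^{1/\gamma}\rceil-m$ dummy variables fixed by $\rho$, using the $O(m^3)$ bound on distinct clauses to keep $\mathrm{poly}(|x|)$ at $2^{o(m)}$. Your Sparsification Lemma alternative and the tautological-clause padding variant are valid but unnecessary extras.
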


\begin{proof}
(1) Enumerate the at most $2^{f(n)}=n^{O(1)}$ extensions of $\rho$ and evaluate $F$ on each of them in linear time.

(2) Membership in $\UNP[K_f]$: guess values for the at most $n^\gamma$ free variables and evaluate $F$, in polynomial time. For the lower bound, let $y$ be an arbitrary 3-CNF formula over $m$ variables. Add $M-m$ dummy variables, where $M=\lceil m^{1/\gamma}\rceil$, which occur in no clause, and let $\rho$ assign all of them to false. The pair $(y,\rho)$ has $n=M$ variables, of which at most $m\le n^{\gamma}$ are free, so it satisfies $K_f$; it belongs to $L$ iff $y$ is satisfiable; and its size is polynomial in $m$, because $\gamma$ is a constant and a 3-CNF formula over $m$ variables has $O(m^3)$ distinct clauses. A polynomial-time algorithm for $L$ on $K_f$-instances would therefore decide 3-SAT in polynomial time.\footnote{A concrete instance of the padding follows. Take $\gamma=\tfrac12$ and $m=9$. The padding appends $m^{1/\gamma}-m=9^{2}-9=72$ dummy variables, all fixed to false, giving $n=81$ variables, of which exactly $m=n^{\gamma}=9$ (the original ones) are free.}

(3) Apply the same padding. On the padded instances $n^{\gamma}=\Theta(m)$ and $|x|=\mathrm{poly}(m)$, so an algorithm running in time $2^{o(n^{\gamma})}\cdot\mathrm{poly}(|x|)$ on them would decide 3-SAT over $m$ variables in time $2^{o(m)}$, contradicting ETH.
\end{proof}

The number of free variables thus has a sharp threshold: logarithmically many keep the problem in $\UP[K_f]$, whereas any polynomial number $n^\gamma$ already makes it as hard as general 3-SAT, up to polynomial padding; under ETH, the brute-force bound $2^{n^\gamma}\cdot\mathrm{poly}(|x|)$ is optimal up to the constant in the exponent. Note that part~(2) needs only $P\neq NP$; ETH is required only for the quantitative bound of part~(3).

\section{\texorpdfstring{Positioning and Discussion}{Positioning and Discussion}}
\label{sec:Originality}

We first position the framework with respect to other refinements of worst-case analysis (Section~\ref{sec:superiority}), then compare the parameterized and the uncertainty views on the Partition problem (Section~\ref{app:Partition}), and finally examine further NP-hard problems (Section~\ref{sec:MoreCollapse}) and tractable tasks (Section~\ref{sec:tractable}).

\subsection{\texorpdfstring{Relation to Other Refinements of Worst-Case Analysis}{Relation to Other Refinements of Worst-Case Analysis}}
\label{sec:superiority}

\noindent 
{\em How does this differ from standard Fixed-Parameter Tractability (FPT)?}
As described in \S \ref{sec:BridgeToParameterizedComplexity},
Fixed-Parameter Tractability (FPT) is a framework in computational complexity that studies whether a hard problem becomes efficiently solvable when some aspect of the input (the parameter) is small.
It is the central concept in parameterized complexity theory.
The basic idea is that
many NP-hard problems are difficult because the input size $n$ is large. However, in practice, some property of the input, such as the size of the desired solution, the treewidth of a graph, or the number of variables, may be small.
Parameterized complexity separates these two quantities:
(a) $n$: the overall input size,
(b) $k$: a parameter that is expected to be small.
Instead of asking whether a problem can be solved in polynomial time in $n$, we ask whether it can be solved in
$f(k) \cdot n^{O(1)}$,
where
$f(k)$ is any computable function depending only on $k$,
$n^{O(1)}$  is a polynomial in the input size.
If such an algorithm exists, the problem is fixed-parameter tractable (FPT).
This means that in FPT, we already parameterize complexity using a parameter $k$. Why do we need $\UP[K]$?
Standard FPT parameterizes the numerical properties of the input (e.g., vertex cover size $k$).
$\UP[K]$ parameterizes the logical predicate $K$ and its information content $\Psi(n,K)$. 
Numerical parameterizations are recovered as graded families of threshold predicates (Proposition~\ref{thm:bridge}). Below we discuss in what sense the resulting bounds differ from classical and pseudo-polynomial ones, and how the framework relates to average-case analysis.

\ \\
\noindent 
\textbf{A. Removing pseudo-polynomial dependencies.} Classical dynamic programming for numerical NP-complete problems yields bounds such as $O(n\cdot B)$, where $B$ is the sum of the values, i.e.\ exponential in their bit length $b$. Under $K_{\text{allEqual}}$ the guarded algorithm uses $O(n)$ word operations, independently of $b$ (one comparison per cell). This is an instance of a general and well-known phenomenon (restrictions can remove pseudo-polynomial dependencies), not a new type of bound; what the framework adds is that the restriction and the cost of checking it are stated explicitly.

\ \\
\noindent 
\textbf{B. Information lower bounds.} In the cell-probe model, the task-relative metric gives a lower bound on the promise cost of \emph{every} algorithm, while Theorem~\ref{thm:uncertainty} gives an upper bound for the specific algorithm $\mathcal{A}_K$ run under the promise. By Remark~\ref{rem:recon-vs-task} and Theorem~\ref{thm:uncertainty}, for every algorithm $\mathcal{A}$ correct under the promise $K$,
\[
\Psi_{\mathrm{task}}(n,K,P)\;\le\; W^{\mathrm{prom}}_{\mathcal{A}}(n,K),
\qquad
W^{\mathrm{prom}}_{\mathcal{A}_K}(n,K)\;\le\; C_{\mathrm{find}}(n,K)+O\bigl(\Psi(n,K)\bigr)+T_{\mathcal{B}}\bigl(n,\Psi(n,K)\bigr).
\]
The two sides can be far apart: for Partition with $K=\epsilon$ they are $n$ and exponential, respectively. They match up to constants in the constant-uncertainty cases of \S\ref{sec:ThePartitionProblem}: under $K_{\mathrm{allEqual}}$ and under $K_{\mathrm{arith}}$ the decision depends only on $n$, so $\Psi_{\mathrm{task}}=0$ and the promise cost is $O(1)$. The lower bound must use $\Psi_{\mathrm{task}}$ rather than $\Psi$: for lookup in a sorted array $\Psi=n$, yet only $O(\log n)$ cells need to be read (\S\ref{sec:tractable}).

\ \\
\noindent
{\bf C. How does this differ from average-case complexity?}
Average-case complexity \cite{levin1986avg,ben-david1992avg,bogdanov2006avg}
measures, for a distribution $\mathcal{D}=\{\mathcal{D}_n\}$ over
instances, the expected cost
$\mathrm{avgCost}_{\mathcal{D}}(n)=\mathbb{E}_{x\sim\mathcal{D}_n}
[T_{\mathcal{A}}(x)]$. It answers: \emph{is the typical instance easy?}
The present framework answers a different question:
\emph{is there a checkable structural property $K$ such that every
instance satisfying $K$ is easy?} The two views are complementary:
\begin{itemize}
	\item \emph{Quantitative bridge.} If the support of $\mathcal{D}_n$ is
	contained in $\{x\models K\}$, then
	$\mathrm{avgCost}_{\mathcal{D}}(n)\le W_{\mathcal{A}}(n,K)$;
	in particular, $P\in\UP[K]$ implies that $P$ is average-case polynomial
	for \emph{every} distribution concentrated on $\{x\models K\}$. The
	converse fails in general: a problem may be easy on random instances without
	admitting any \emph{recognizable} property $K$ that certifies easiness. The closest average-case notions are \emph{errorless heuristics} \cite{bogdanov2006avg} and \emph{generic-case} algorithms \cite{kapovich2003generic}, which never err but may answer ``don't know''. Such an algorithm implicitly defines the property ``the algorithm answers'', but this property need not be cheaper to check than running the algorithm. The $\UP[K]$ view makes the guard an explicit, named predicate with its own cost $C_{\mathrm{verify}}$; the average-case view adds \emph{typicality} across all instances.
	\item \emph{Worst-case vs.\ average-case for hard problems.} For
	problems with random self-reducibility
	\cite{blum1984prob,feigenbaum1991random}, notably the discrete logarithm in a fixed group, worst-case and
	average-case hardness coincide; this is why cryptography can rely on average-case assumptions for them, and it is consistent with knowledge resistance in the coverage sense of Proposition~\ref{prop:factoring}. It does not imply a small cost variation, since trivial instances still exist. At the opposite end, number
	partitioning is \emph{easy on average} in the regime where the number of items is large compared with the bit length of the values
	(Mertens calls it ``the easiest hard problem'' \cite{mertens2006easiest}; see also \cite{borgs2001phase} for the underlying phase transition), while its
	worst case is exponential. The easiness of random instances is established statistically, without a per-instance certificate; certifiable structure and typicality are thus complementary refinements of worst-case complexity.
\end{itemize}

	\paragraph{Is the metric $\Psi(n,K)$ computable?}
		    When the cell alphabet $\Sigma$ is finite, $I_n$ is finite and $\Psi(n,K)$ is a max--min over finitely many instances and cell subsets, so for every \emph{decidable} property $K$ it is computable by exhaustive search (though not necessarily efficiently); for undecidable $K$ it need not be. With an infinite alphabet (e.g.\ unbounded integers), checking whether $Q$ determines $x$ quantifies over infinitely many inputs and need not be decidable in general; for the concrete properties of this paper $\Psi$ is determined by direct argument. This is a different matter from the
		Kolmogorov complexity $\kappa(x\mid K)$, which is itself uncomputable even
		though Lemma~\ref{lem:compression} bounds $\kappa(x\mid K,n)$ above by
		$\Psi(n,K)\log|\Sigma|+O(1)$; the uncomputability of $\kappa$ does not by
		itself transfer to $\Psi$. For practical use we restrict $K$ to tractable
		predicates for which $\Psi(n,K)$ is efficiently computable (e.g., in
		polynomial time).

\ \\
\noindent 
{\bf Is the Uncertainty Reduction Theorem model-dependent?}
The theorem charges $O(1)$ per cell read, which presupposes random access; this is why we adopted the word RAM in \S\ref{sec:SU}. On a single-tape Turing machine, reaching the cells of $Q$ incurs traversal costs; if $Q$ is sorted, a single left-to-right sweep reads all of it in $O(n)$ steps, so the reading term becomes $O(n)$ (and the prefix variant $\Psi_{\mathrm{pref}}$ avoids the issue altogether). The locating charge $C_{\mathrm{find}}(n,K)$, however, is not an artifact of the machine model: it is intrinsic to the existential definition of $\Psi(n,K)$, and it is exactly what Theorem~\ref{thm:uncertainty} makes explicit.

\ \\
{\bf Relation to non-deterministic computation trees (speculative).}
A property $K$ can also be viewed as pruning the computation tree of a non-deterministic machine: under $K_f$ of \S\ref{sec:ETH}, for instance, the relevant branching is over $2^{f(n)}$ assignments instead of $2^{n}$. Making this precise beyond such examples, e.g.\ as a ``branching entropy'' of the computation tree restricted to $\{x\models K\}$, is left open.

\subsection{Partition Problem: Fixed-Parameter Tractability vs Sensitivity to Uncertainty}
\label{app:Partition}


\subsubsection{Analyzing the Partition Problem through FPT}

Below we describe the complexity of the partition problem (of a set of positive integers $a_1,\ldots,a_n$),
according to various parameters, in a FPT style.
\begin{enumerate}
	\item  Parameter {\em Target Sum.}

	Let $S = \frac{1}{2}\sum_{i=1}^n a_i$.
	The classical dynamic programming algorithm solves Partition in time
	$O(nS)$. This running time can be written as $f(S) \cdot n^{O(1)}$,
	where $f(S) = S$. Since $f(S)$ is a computable function of the
	parameter $S$ alone, Partition is \emph{fixed-parameter tractable
		(FPT)} with respect to the target sum $S$.

	\item Parameter {\em Maximum Element.}
	Let
	$
	M=\max_i a_i.
	$
	Since
	$
	\sum_i a_i \le nM,
	$
	the dynamic programming algorithm runs in
	$
	O(n^2M).
	$
	Equivalently,
	$
	O(Mn^2)=f(M)n^2,
	$
	where $f(M)=M$.
	Hence, Partition is also \textit{FPT} when parameterized by the maximum input value.
	
	\item Parameter {\em Number of Items.}
	If the parameter is simply the number of integers $n$, exhaustive search examines all subsets in
	$
	O(2^n)
	$
	time.
	Since this can be written as
	$
	2^n=f(n),
	$
	Partition is trivially FPT with respect to $n$. However, this parameterization is generally considered uninteresting because the parameter is essentially the entire input size.
	
	\item Parameter {\em Number of Distinct Values.}
	Suppose the input contains only $k$ distinct integer values.
	In this case, several algorithms based on compressed dynamic programming and integer programming techniques (e.g.\ integer programming in fixed dimension \cite{lenstra1983integer}) show that Partition (and the related Subset Sum problem) is fixed-parameter tractable with respect to $k$.
	
	\item Parameter {\em Allowed Imbalance.}
	A related optimization problem asks whether the integers can be partitioned into two subsets whose sums differ by at most $k$.
	The parameterized complexity of this variant depends on the precise formulation, but several versions admit FPT algorithms and parameterized approximation schemes.

\end{enumerate}


Also note that the Partition is a special case of the \emph{Subset Sum} problem,
meaning that if we set $
T=\sum_{i=1}^{n} a_i
$,
then Partition is equivalent to asking whether there exists a subset whose sum is
$
\frac{T}{2}.
$
Consequently, many parameterized algorithms and complexity results for {\em Subset Sum} immediately apply to Partition.

\ \\

\noindent
{\bf Summary.}
A summary of the partition problem from a FPT perspective,  is given in Table \ref{tab:PartFPT}.
Note that the Partition problem is considerably more tractable than many other NP-complete problems from a parameterized perspective. 
In particular, parameterizations based on numeric quantities such as the target sum or the maximum input value lead to fixed-parameter tractable algorithms via classical pseudo-polynomial dynamic programming. 
This contrasts with strongly NP-complete problems, for which such pseudo-polynomial and FPT algorithms are generally not expected to exist.

\begin{table}[htbp]
	\begin{center}
		\begin{tabular}{|l|c|l|}
			\hline
			\textbf{Parameter} & \textbf{FPT?} & \textbf{Reason} \\
			\hline
			Target sum $S$ & Yes & Dynamic programming in $O(nS)$ \\
			\hline
			Maximum value $M$ & Yes & Since $S\le nM$, DP runs in $O(n^2M)$ \\
			\hline
			Number of distinct values & Yes & Compressed DP / integer programming \\
			\hline
			Number of items $n$ & Yes (trivial) & Exhaustive search in $O(2^n)$ \\
			\hline
			None (polynomial time) & No (unless $P=NP$) & Partition is NP-complete \\
			\hline
		\end{tabular}
		\caption{Partition as FPT problem}
		\label{tab:PartFPT}
	\end{center}
\end{table}


\subsubsection{Analyzing the Partition Problem through Sensitivity to Uncertainty}

The relationship between $\UP[K]$ and standard parameterized complexity is one of \emph{complementarity}. By Proposition~\ref{thm:bridge}, FPT corresponds to uniform membership in a graded family $\UP[K_k]$, and, when the residual information is efficiently locatable and maps back to a small instance of the same problem, the reduction of Theorem~\ref{thm:uncertainty} is analogous to a kernelization of the restricted problem. The parameterizations of Table~\ref{tab:PartFPT} are of this graded kind; the properties of \S\ref{sec:ThePartitionProblem} (all-equal, arithmetic sequence) are individual predicates, and their natural parameterization (e.g.\ by the number of distinct values, which is $1$ under all-equal) already appears in the table. What the uncertainty view contributes on Partition is therefore not new tractable cases but a different accounting:

\begin{enumerate}
	\item \textit{Predicates as parameters.} Standard FPT parameterizes a problem by a numeric feature of the input (treewidth, solution size, maximum value); the present framework starts from a logical predicate $K$ and measures the residual information $\Psi(n,K)$ that it leaves. Numeric parameterizations are the special case of threshold predicates $K_k$.
	\item \textit{Explicit accounting of verification and find costs.} Parameterized algorithms often assume that a structure witnessing the parameter is given (e.g.\ a tree decomposition), although computing it is also studied (e.g.\ by Bodlaender's algorithm \cite{bodlaender1996linear} or Korhonen's approximation \cite{korhonen2021single}). The framework makes the cost of checking $K$, $C_{\mathrm{verify}}$, and the cost of locating the residual information, $C_{\mathrm{find}}$, mandatory terms of every bound.
	\item \textit{Promise versus guarded cost.} In the constant-uncertainty regime ($\Psi(n,K)=O(1)$), the promise cost can be $O(1)$, i.e.\ below the $f(k)\cdot n^{c}$ form, because the input need not even be read. Once the guard is charged, however, the guarded cost $C_{\mathrm{verify}}(n,K)+C_{\mathrm{find}}(n,K)+O(1)$ is again of the usual form.
\end{enumerate}

We therefore do \emph{not} claim that $\UP[K]$ is superior to FPT as a framework. An open direction is a hardness hierarchy for \emph{locating} tractable properties $K$ (a $W$-analogue for the find cost), which would tell us which instance classes are not merely unparameterized but genuinely hard to exploit.

\subsection{Collapse to Polynomial: Further Case Studies}
\label{sec:MoreCollapse}

We now examine three well-known $\mathrm{NP}$-hard problems where structural
knowledge $K$ provably collapses the exponential worst case to polynomial
time, and where the collapse is made transparent by the metrics
$(W(n,K), \Psi(n,K))$ of this framework.

\subsubsection{Boolean Satisfiability with Strong Backdoor Sets}

The \emph{Boolean Satisfiability} problem (SAT) asks: given a propositional
formula $F$, typically in conjunctive normal form, i.e.\ a conjunction of
clauses, each clause being a disjunction of literals, does there exist a
truth assignment to its variables that makes $F$ evaluate to true? SAT was the
first problem proved $\mathrm{NP}$-complete (Cook--Levin), and no algorithm
substantially faster than exhaustive search over all $2^{n}$ assignments is
believed to exist (the Strong Exponential Time Hypothesis, SETH). 
Yet, 
its worst-case complexity collapses once the formula
carries structural information, in particular, a small set of variables whose
fixing reduces the rest of the formula to a tractable class.

Let $\mathcal{C}$ be a tractable class of formulas (e.g.\ 2-SAT or Horn).
A set $B$ of variables of a CNF formula $F$ is a \emph{strong backdoor set}
of size $k=|B|$ with respect to $\mathcal{C}$ if for \emph{every} partial
assignment $\tau : B \to \{0,1\}$ (there are $2^k$ of them) the reduced
formula $F[\tau]$ belongs to $\mathcal{C}$ \cite{williams2003backdoors,gaspers2021backdoors}.
The mapping to our perspective follows:

\begin{itemize}
	\item \textit{Property $K$:} $F$ is a 3-CNF formula that has a strong backdoor set of size at most $k$ with respect to 2-CNF.
	\item \textit{Checking $K$:} a set $B$ is a strong 2-CNF backdoor of a 3-CNF formula iff every clause with three distinct variables contains a variable of $B$. Indeed, an assignment to $B$ that falsifies all $B$-literals of a clause leaves its other literals in place, so every clause must have at most two variables outside $B$. Finding such a $B$ of size at most $k$ is therefore a 3-Hitting Set problem, solvable by bounded branching in time $O(3^k\cdot|F|)$ \cite{gaspers2021backdoors}. For $k=O(\log n)$ this check is polynomial, and here $C_{\mathrm{verify}}(K)$ includes finding $B$.
	\item \textit{Metrics:} the whole formula must still be read, so
	$\Psi(n,K)=\Theta(|F|)$; what collapses is the \emph{active dimension}
	$\alpha=k$ (Def.~\ref{def:activedim}), the number of backdoor variables the
	solver branches over (search-interface regime,
	Remark~\ref{rem:bridge-regimes}).
	\item \textit{Cost collapse:} enumerate the $2^k$ assignments of $B$ and run a
	linear-time 2-SAT solver on each reduced formula; together with the check, the guarded cost is $O(3^{k}\cdot|F|)$. If $k=O(\log n)$, this is polynomial.
\end{itemize}

For example, let $F=(x_1\lor x_2)\land(\neg x_2\lor x_3)\land(x_1\lor\neg x_3\lor x_4)$.
The only clause with three literals is $C_3=(x_1\lor\neg x_3\lor x_4)$, so $B=\{x_4\}$ is a strong backdoor set with respect to 2-CNF: if $x_4\mapsto\top$, $C_3$ is satisfied and disappears; if $x_4\mapsto\bot$, $C_3$ becomes the 2-clause $(x_1\lor\neg x_3)$. In both cases the reduced formula $F[\tau]$ is a 2-CNF and is decidable in linear time, so $F$ is decided by solving the $2^{|B|}=2$ reduced formulas. The answer still depends on the whole formula; what the backdoor bounds is the \emph{branching} of the solver, $\alpha=|B|=1$.

\noindent
Consequently $\mathrm{SAT}\in\UP[K]$ for $k=O(\log n)$, although 3-SAT is
$\mathrm{NP}$-complete at $K=\epsilon$: the exponential penalty has moved from the number of variables to the backdoor size $k$. For general CNF formulas and other base classes, detecting small strong backdoors can be harder, and the complexity of backdoor detection depends on the base class \cite{gaspers2021backdoors}; there, $K$ is best treated as a certified promise, with $B$ supplied as part of the input.

\subsubsection{Vertex Cover on Graphs of Bounded Treewidth}

The \emph{Vertex Cover} problem asks: given an undirected graph $G=(V,E)$ with
$n$ vertices, find the smallest set of vertices $C\subseteq V$ such that every
edge $e\in E$ is incident to at least one vertex of $C$ (i.e.\ every edge is
\emph{covered}). The decision version asks whether such a set of size at most
$k$ exists. Vertex Cover is one of the classic $\mathrm{NP}$-complete problems,
with a naive exhaustive search over all $2^n$ vertex subsets. 
Yet, 
as with SAT,
its worst-case complexity collapses once the graph carries structural
information, in particular, a bounded \emph{treewidth} $\tau$.
In particular, suppose the input graph $G$ has treewidth $\tau$, i.e.\ a nice tree decomposition
of width $\tau$ is available \cite{robertson1986treemwidth,bodlaender1996linear,arnborg1989partial}.
The mapping to our perspective follows:

\begin{itemize}
	\item \textit{Property $K$:} $\mathrm{tw}(G)\le \tau$ (i.e. decomposition given). If the decomposition is not given, an optimal one can be computed in time $2^{O(\tau^3)}\cdot n$ \cite{bodlaender1996linear}, and one of width at most $2\tau+1$ in time $2^{O(\tau)}\cdot n$ \cite{korhonen2021single}. The latter suffices for the dynamic program (with $2^{2\tau+1}$ in place of $2^{\tau}$), so for $\tau=O(\log n)$ the find cost and the whole computation remain polynomial. This is a clear instance of a find cost $C_{\mathrm{find}}$ whose size depends on the algorithm used to locate the structure.
	\item \textit{Uncertainty metric:} the whole graph must still be read
	(i.e. $\Psi(n,K)=n$); what collapses is the \emph{active dimension} of the
	search: the dynamic program over the decomposition passes only
	$O(\tau)$-sized interface states between adjacent bags.
	\item \textit{Cost collapse:} the standard dynamic programming algorithm over the decomposition runs in
	$
	W^{\mathrm{prom}}_{\mathrm{VC}}(n,K)\ \le\ O(2^{\tau}\cdot\tau^{O(1)}\cdot n),
	$
	polynomial for $\tau=O(1)$, and $n^{O(1)}$ for $\tau=O(\log n)$.
\end{itemize}

This case is interesting because the mechanism differs from the previous ones:
the collapse is not obtained by reading fewer input cells but by
\emph{parameterizing the search interface}: this is the search-interface
regime of the Bridge to FPT (Remark~\ref{rem:bridge-regimes}), with active
dimension $\alpha=\tau$ (Def.~\ref{def:activedim}).
It thus illustrates how the
framework absorbs classical FPT results as instances of knowledge-driven
tractability.

\subsubsection{\texorpdfstring{The Traveling Salesperson Problem with Monge-Type Matrices}{The Traveling Salesperson Problem with Monge-Type Matrices}}
\label{sec:TSPwithMonge}

The \emph{Traveling Salesperson Problem} (TSP) asks for the shortest Hamiltonian
cycle visiting $n$ cities given their pairwise distances. In full generality it
is $\mathrm{NP}$-hard, and the canonical dynamic program
(Bellman--Held--Karp) runs in $O(2^{n}n^{2})$ time, exponential even for
moderate $n$. Yet, as with SAT and Vertex Cover, this pessimistic bound
collapses the moment the distance matrix carries algebraic structure,
in particular, when it satisfies certain ordering inequalities of Monge type.

Three classical well-solvable cases are relevant \cite{burkard1998wellsolvable,burkard1996perspectives}. A matrix $D$ is a \emph{Monge} matrix if $D[i,j]+D[k,l]\le D[i,l]+D[k,j]$ for all $i<k$ and $j<l$; a symmetric Monge matrix is also called a \emph{Supnick} matrix. A symmetric matrix $D$ is a \emph{Kalmanson} matrix if $D[i,j]+D[k,l]\le D[i,k]+D[j,l]$ and $D[i,l]+D[j,k]\le D[i,k]+D[j,l]$ for all $i<j<k<l$. \emph{Demidenko} matrices form a broader asymmetric class that contains the Monge matrices.
\begin{itemize}
	\item \textit{Supnick (symmetric Monge):} the tour $1\to3\to5\to\cdots\to n\to\cdots\to6\to4\to2\to1$ (odd cities in increasing order, then even cities in decreasing order) is optimal \cite{supnick1957extreme}.
	\item \textit{Kalmanson:} the identity tour $1\to 2\to \dots \to n\to 1$ is optimal \cite{kalmanson1975edgeconvex}.
	\item \textit{Demidenko (in particular, asymmetric Monge):} some pyramidal tour is optimal, and an optimal pyramidal tour can be computed in $O(n^2)$ time by dynamic programming \cite{burkard1998wellsolvable}; the identity tour is not optimal in general.
\end{itemize}
The mapping to our perspective follows:
\begin{itemize}
	\item \textit{Property $K$:} $D$ is a Supnick (respectively, Kalmanson) matrix.
	\item \textit{Checking $K$:} a matrix is Monge iff all its adjacent $2\times2$ submatrices satisfy the Monge inequality, so $K_{\mathrm{Supnick}}$ can be checked with $O(n^2)$ comparisons, i.e.\ in time \emph{linear} in the input size of $n^2$ cells \cite{burkard1996perspectives}; checking all Kalmanson inequalities naively takes $O(n^4)$ time.
	\item \textit{Uncertainty metrics:} $K$ leaves the matrix largely undetermined, so $\Psi(n^2,K)$ is not small. However, for the task ``output an optimal tour'' $K$ fixes the answer completely, so $\Psi_{\mathrm{task}}=0$: no cell needs to be read. Computing the length of that tour, or answering the decision version ``is there a tour of length at most $B$?'', requires reading the $n$ entries along the tour, so $\Psi_{\mathrm{task}}\le n$.
	\item \textit{Cost collapse:} the promise cost is $O(n)$ (to compute the length of the fixed tour), versus $O(2^n n^2)$ for general TSP via Bellman--Held--Karp; for Demidenko matrices it is $O(n^2)$.
\end{itemize}

\noindent
Consequently, $\mathrm{TSP} \in \UP[K]$ for these properties, despite the general problem being $\mathrm{NP}$-hard, and the algorithm that sums the $n$ entries along the fixed tour is NSU on $\{x\models K\}$. This is a third mechanism of tractability: the property fixes the optimal \emph{solution} rather than the input.

\ \\
\smallskip
\noindent\textbf{Summary.} Together with the Partition case, these examples exhibit three distinct mechanisms of tractability, which a single worst-case bound aggregates: \\
$\bullet$ \emph{reduction of residual uncertainty} (constant $\Psi$ or $\Psi_{\mathrm{task}}$, as in Partition under all-equal), \\
$\bullet$ \emph{FPT-style interface parameterization}, which is polynomial when the active dimension is logarithmic ($\alpha=O(\log n)$ for SAT backdoors, $\tau=O(\log n)$ for bounded treewidth, with the decomposition given or computed approximately \cite{korhonen2021single}), and \\
$\bullet$ \emph{algebraic structure that fixes the optimal solution} (TSP on Supnick, Kalmanson and Demidenko matrices), where $\Psi_{\mathrm{task}}$ is small although $\Psi$ is not.

\subsection{The Framework in Tractable Tasks}
\label{sec:tractable}

Note that the uncertainty analysis is not restricted to NP-hard problems: 
it applies to
any computational task, including polynomial-time ones. Consider 
{\em membership lookup} in an array $A[1..n]$: 
without structural knowledge ($K=\epsilon$) the
worst case is a linear scan, so $\Delta_{\mathrm{Lookup}}(n,\epsilon)=O(n)$.
If the array is known to be {\em sorted} ($K_{\mathrm{sorted}}$), 
{\em binary search} answers the query by reading only $O(\log n)$ cells, hence
$\Delta_{\mathrm{Lookup}}(n,K_{\mathrm{sorted}})=O(\log n)$,
meaning that  the cost
variation collapses from linear to logarithmic.
Note that this collapse is
\emph{not} a reduction of input uncertainty, reconstructing the whole
array still requires reading all $n$ cells, so
$\Psi(n,K_{\mathrm{sorted}})=n$, but a reduction of \emph{task}
uncertainty: the query can be resolved by reading only the cells that the
search probes. The task-relative variant, let's denote it by  $\Psi_{\mathrm{task}}(n,K)$
(as introduced in 
\S  \ref{sec:UncertaintyReductionTheorem}, Remark \ref{rem:recon-vs-task}), i.e.
the
minimum number of cells that must be read to resolve the task (not to
reconstruct the input), makes this precise:
$\Psi_{\mathrm{task}}(n,K_{\mathrm{sorted}})=O(\log n)$, and the same
uncertainty-reduction reasoning applies verbatim with $\Psi_{\mathrm{task}}$
in place of $\Psi$. This shows that Sensitivity to Uncertainty is a
continuous spectrum: exponential collapse for NP-hard problems with strong
structure, logarithmic collapse for tractable problems (like lookup on
sorted data), 
and, for knowledge-resistant problems such as Integer Factorization on balanced semiprimes, no collapse on any property of non-negligible coverage (conditionally on the factoring assumption, Proposition~\ref{prop:factoring}).

\section{Applications in Software Engineering, Agentic AI and Quantum Computing}
\label{sec:SE}

The accounting of the previous sections also has practical uses. Software architects, compiler designers, and database engineers routinely bypass worst-case bottlenecks based on instance-level observations. In this section we show how the framework turns such observations into explicit, checkable contracts, and we relate them to established techniques such as algorithm selection and sequential testing.

\subsection{Predicate-Based Algorithm Dispatching}

In software architecture, developers frequently encounter tasks that are globally $NP$-hard (such as bin-packing for cloud resource allocation, register allocation in compilers, or subset-sum checks in financial ledgers). Theorem~\ref{thm:uncertainty} suggests a design pattern that we call \emph{Predicate-Based Algorithm Dispatching}. It is a special case of Rice's algorithm selection problem \cite{rice1976algorithm}, in which the selector is a set of checkable predicates rather than, for example, a learned model as in portfolio solvers \cite{xu2008satzilla}.

Rather than routing all arbitrary inputs of size $n$ to an expensive, general-case exponential solver, the system can be  designed as a two-tier architectural mediator:
\begin{enumerate}
	\item \textit{The Guard Predicate ($C_{\text{verify}}(K)$):} A lightweight, deterministic polynomial-time checker that scans the incoming instance $x$ to verify if it satisfies a pre-specified structural property $K$.
	\item \textit{The Dispatcher:} If the check succeeds ($x \models K$), the dispatcher routes the input to a specialized, fast solver that exploits $K$. If the check fails, the dispatcher falls back to the generic $NP$-hard engine.
\end{enumerate}

By Theorem~\ref{thm:uncertainty}, the running time of this execution path on an input $x\models K$ is bounded by
\[
T_{\mathcal{A}_K}(x)\;\le\;
C_{\mathrm{verify}}(n,K)+C_{\mathrm{find}}(n,K)+O\bigl(\Psi(n,K)\bigr)+T_{\mathcal{B}}\bigl(n,\Psi(n,K)\bigr),
\]
where $\mathcal{B}$ solves the reduced problem on the $\Psi(n,K)$ determining cells. The dispatch is beneficial when this bound is smaller than the cost of the general solver; on inputs that fail the guard, the verification cost is pure overhead, which is why the order of guards matters (\S\ref{sec:GuardOrdering}). For the structural predicates in this paper's case studies (all-equal, 2-CNF, Horn, tree, bipartiteness), $C_{\mathrm{verify}}(K)$ is $O(n)$ (or $O(|V|+|E|)$) and $C_{\mathrm{find}}(n,K)$ is $O(1)$ or immaterial, so the latency bound of \S\ref{sec:SLA} is linear plus the solver term. For predicates with super-linear verification (e.g.\ recognition of perfect graphs, $O(|V|^9)$), the same architecture still yields a polynomial bound, but the verification cost becomes the dominant term.

\subsection{Deterministic SLAs via Design-by-Contract in $\UP[K]$}
\label{sec:SLA}

In modern distributed microservice architectures, API gateways must guarantee strict, deterministic \emph{Service Level Agreements} (SLAs) for maximum response latency (e.g., ``this endpoint will return a result in under $50\text{ms}$''). Under classical worst-case complexity theory, if a microservice relies on a backend engine that solves an $NP$-complete problem, 
guaranteeing a reliable SLA in polynomial time is impossible without
assumptions on the input (unless $P=NP$).

By adopting the classes $\UP[K]$, software engineers can obtain explicit worst-case latency bounds using the \emph{Design-by-Contract} paradigm. Rather than defining a weak API contract with an unstructured parameter (such as \texttt{List<int> input}), the gateway enforces a structural precondition corresponding to a property $K$ (such as \texttt{AllEqualList<int> input} for Partition, \texttt{BipartiteGraph input} for Vertex Cover, or \texttt{TwoCNF input} for SAT):
\begin{enumerate}
	\item The contract names $K$, and the gateway checks it at cost $C_{\mathrm{verify}}(n,K)$ (or trusts a certificate provided upstream).
	\item Because the backend problem is in $\UP[K]$ through an explicit algorithm, Theorem~\ref{thm:uncertainty} yields an explicit polynomial bound on the number of operations as a function of $n$.
	\item Together with a size limit on admissible requests and a calibration of the cost per operation on the target hardware, this gives an \emph{algorithmic} latency bound. It is deterministic in the sense that it holds for every admissible input; actual latencies also depend on hardware, load and constant factors, which is why measured SLA checks can still fail (Appendix~\ref{sec:ImplResults}).
\end{enumerate}


\subsubsection{Proof of Concept Implementation}

An implementation of the Design-by-Contract
for the Partition problem, 
as well as for all problems 
discussed in \S \ref{sec:MoreCollapse} (SAT, Vertex Cover, TSP),
is available  at 
\url{https://github.com/YannisTzitzikas/PNPu}.
For example, the output of the execution of the partition problem, 
is shown below.

\begin{Verbatim}[frame=lines,breaklines=true,fontsize=\scriptsize]
================================================================================
DESIGN-BY-CONTRACT & PARAMETRIC COMPLEXITY (P_u[K]) DEMONSTRATION
================================================================================

--> Test 1: Processing input of size N = 10,000,000 satisfying Property K_2 (All Equal)...
- Partition Result   : TRUE (Equal Partition Exists)
- Matched Property K : K_ALL_EQUAL (s_i = k)
- Guard Verify Time  : 2.140 ms (C_verify(K) linear scan)
- Solver Exec Time   : 0.000300 ms (P_u[K] O(1) Collapse)
- Total Time         : 2.140 ms
- SLA Status         : PASSED [SLA MET]

--> Test 2: Processing input of size N = 10,000,000 satisfying Property K_3 (Sequential {1..N})...
- Partition Result   : TRUE (Equal Partition Exists)
- Matched Property K : K_ARITHMETIC_SEQ ({1..n})
- Guard Verify Time  : 2.315 ms (C_verify(K) linear scan)
- Solver Exec Time   : 0.000200 ms (P_u[K] O(1) Collapse)
- Total Time         : 2.315 ms
- SLA Status         : PASSED [SLA MET]

--> Test 3: Testing small unstructured input (N = 30, K = \epsilon)...
- Partition Result   : TRUE (Equal Partition Exists)
- Matched Property K : K_UNSTRUCTURED (Fallback DP)
- Guard Verify Time  : 0.002 ms (C_verify(K) linear scan)
- Solver Exec Time   : 0.018000 ms (P_u[K] O(1) Collapse)
- Total Time         : 0.020 ms
- SLA Status         : PASSED [SLA MET]

--> Test 4: Attempting large unstructured input (N = 1,000, K = \epsilon)...
[CONTRACT REJECTION TRIGGERED]
Exception Message: Design-by-Contract Violation: Input instance lacks structural property K (K=\epsilon) and size N=1000 exceeds deterministic SLA execution bounds.
SLA Status       : PROTECTED (System prevented unbounded CPU execution!)

================================================================================
SUMMARY OF PRACTICAL VALUE:
1. Verifying property K in O(n) collapses NP-hard solving to O(1) in P_u[K].
2. Execution time for 10 Million elements took < 3ms, easily meeting <5ms SLAs.
3. Small unstructured inputs meet the SLA.
4. Unstructured inputs are rejected upfront, guaranteeing deterministic latency.
================================================================================
\end{Verbatim}

The key software engineering takeaways illustrated by the code are:
\begin{enumerate}
	\item \emph{Explicit bounds under a contract.} In Test~1, with $N=10{,}000{,}000$ elements, exhaustive search would be hopeless, and even the pseudo-polynomial dynamic program would need time proportional to $N$ times the sum of the values; under the all-equal contract the total execution time is $2.14$ ms (meeting the $5$ ms SLA threshold).
	\item \emph{Direct mapping to Theorem~\ref{thm:uncertainty}.} In Test~1 the guard $C_{\mathrm{verify}}(K)$ takes $2.14$ ms (the linear scan), whereas the reduced problem (a parity test) takes $0.0003$ ms: as predicted, the guard dominates.
	\item \emph{Protection against pathological latency.} In Test~4 an input without a recognized property is rejected upfront instead of being handed to an exponential solver. This is a policy decision based on input size, not a complexity-theoretic fact: depending on the magnitude of its values, an instance with $N=1{,}000$ may well be solvable quickly by dynamic programming.
\end{enumerate}

\noindent
Screenshots of the execution of the rest programs are available in 
Appendix \ref{sec:ImplResults}.

\subsubsection{Optimal Guard Ordering for Multiple Structural Properties}
\label{sec:GuardOrdering}

The framework so far assumed a single structural property $K$ whose guard
$C_{\mathrm{verify}}(K)$ is checked against every incoming instance. In
practice a dispatcher may test a {\em collection} of properties
$K_1,\dots,K_m$, each yielding a polynomial collapse
$P \in \UP[K_i]$, and route the instance to the solver of the first
property it satisfies. This raises a natural scheduling question: in which
order should the guards be evaluated?

\begin{definition}[Guard and cost model]
	For each $i \in \{1,\dots,m\}$ let
	\begin{itemize}
		\item $v_i \;\triangleq\; C_{\mathrm{verify}}(K_i)$ be the {\em verification
			cost} of the guard $K_i$ (e.g.\ $O(n)$ for a linear scan, $O(|V|^9)$ for perfect graphs);
		\item $c_i$ be the worst-case running time of the fast solver $\mathrm{alg}(K_i)$ after a successful guard (locating, reading and solving the reduced problem);
		\item $p_i \;\triangleq\; \Pr_{x}\bigl[x \models K_i\bigr]$ be the
		{\em hit rate} of $K_i$, i.e.\ the probability that a random input
		satisfies $K_i$ (its coverage under the workload distribution, \S\ref{sec:Parametric}).
	\end{itemize}
	We assume the events are disjoint ($\Pr[x \models K_i \wedge x \models K_j]=0$
	for $i \neq j$); the general case is discussed in Remark~\ref{rem:overlap}.
	The dispatcher evaluates the guards in some fixed order $\pi$ and, upon the
	first index $i$ with $x \models K_{\pi(i)}$, runs
	$\mathrm{alg}(K_{\pi(i)})$ on the projected instance and returns. If no guard
	matches, the input is handled by the generic fallback (or rejected by the SLA
	gate of Section~\ref{sec:SE}).
\end{definition}

\begin{proposition}[Optimal guard ordering]
	\label{prop:order}
	Let $C_{\pi}(x)$ be the total cost incurred by the dispatcher on input $x$
	under the ordering $\pi$, including verification and the executed solver.
	Then the expected cost $\mathbb{E}_x[C_\pi(x)]$ is
	\begin{equation}
	\label{eq:expected}
	\mathbb{E}[C_\pi]
	\;=\;
	\underbrace{\sum_{i=1}^{m} p_i\, c_i}_{\text{\normalsize order-invariant}}
	\;+\;
	\sum_{j=1}^{m} v_{\pi(j)}\,
	\Bigl(1 - \textstyle\sum_{\ell=1}^{j-1} p_{\pi(\ell)}\Bigr),
	\end{equation}
	and an optimal ordering is obtained by sorting the guards in decreasing order
	of the ratio $p_i / v_i$:
	\begin{equation}
	\label{eq:smith}
	\frac{p_{\pi(1)}}{v_{\pi(1)}}\; \ge\;
	\frac{p_{\pi(2)}}{v_{\pi(2)}}\; \ge\;
	\cdots \;\ge\;
	\frac{p_{\pi(m)}}{v_{\pi(m)}}.
	\end{equation}
\end{proposition}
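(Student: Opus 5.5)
The plan is to derive the expected-cost formula \eqref{eq:expected} by linearity of expectation, and then get the ordering rule \eqref{eq:smith} by an adjacent-interchange (Smith's rule) argument.

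\textbf{The cost formula.} Fix an ordering $\pi$ and write $C_\pi(x)$ as a sum of indicator-weighted terms. Guard $K_{\pi(j)}$ is evaluated exactly when none of $K_{\pi(1)},\dots,K_{\pi(j-1)}$ holds. Because the events $x\models K_i$ are disjoint, this event has probability $1-\sum_{\ell<j}p_{\pi(\ell)}$. Each evaluation costs $v_{\pi(j)}$ (taking $v_i$ as the charged verification cost). Solver $\mathrm{alg}(K_i)$ runs exactly when $x\models K_i$ and no earlier guard matched. By disjointness this is simply the event $x\models K_i$, whatever its position in $\pi$, so it contributes $p_i c_i$. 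Summing by linearity of expectation gives \eqref{eq:expected}. The fallback cost when no guard matches occurs with probability $1-\sum_i p_i$, independent of $\pi$. If it is included, it adds a further order-invariant constant, which I will note in passing.

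\textbf{Reducing to the verification sum.} Only the second sum depends on $\pi$. Rewrite it as
\[
\sum_j v_{\pi(j)} \;-\; \sum_{\ell<j} p_{\pi(\ell)} v_{\pi(j)} .
\]
The first part is order-invariant. So minimizing the expected cost is the same as maximizing $\sum_{\ell<j} p_{\pi(\ell)}v_{\pi(j)}$ over ordered pairs in $\pi$.

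\textbf{Adjacent interchange.} Consider two guards $a,b$ in adjacent positions and swap them. Every term involving a third guard is unchanged, because the set of guards preceding that third guard is the same before and after the swap. Placing $a$ before $b$ contributes $p_a v_b$, while the reverse contributes $p_b v_a$. Hence $a$-before-$b$ is at least as good iff $p_a v_b\ge p_b v_a$, i.e.\ $p_a/v_a\ge p_b/v_b$ (assuming $v_i>0$).

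Now take any ordering that violates \eqref{eq:smith}. It has an adjacent inversion, and swapping that pair does not increase the cost. Repeated swaps, as in bubble sort, terminate at a sorted ordering. Hence some ordering sorted by decreasing $p_i/v_i$ is optimal. All sorted orderings have equal cost, since they differ only by swaps of pairs with equal ratio, each of which leaves the cost unchanged.

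\textbf{Main obstacle.} The only delicate step is justifying that solver costs are order-invariant and that reach probabilities telescope. Both rely on the disjointness assumption, so I will state explicitly where it is used. The interchange step itself is routine.
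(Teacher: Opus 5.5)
Your proof is correct and follows the paper's argument. The cost formula comes from linearity of expectation together with disjointness, and optimality comes from an adjacent-interchange (Smith's rule) argument with swap difference $p_b v_a - p_a v_b$. Your rewriting of the verification sum as $\sum_j v_{\pi(j)} - \sum_{\ell<j} p_{\pi(\ell)} v_{\pi(j)}$ is a tidy way to show that a swap leaves every term involving a third guard unchanged; the paper's direct computation leaves that implicit.
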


\begin{proof}
	Guard $\pi(j)$ is evaluated exactly when no earlier guard matched, an
	event of probability $1-\sum_{\ell<j}p_{\pi(\ell)}$, hence the indicators
	$\mathbf{1}[\pi(j) \text{ is reached}]$ have expectation
	$1-\sum_{\ell<j}p_{\pi(\ell)}$; the expected total verification cost is the
	second sum in \eqref{eq:expected}. Each solver $\mathrm{alg}(K_i)$ runs
	exactly when $x \models K_i$, with probability $p_i$ independent of the order,
	which gives the order-invariant first term. To see that
	\eqref{eq:smith} is optimal, suppose two adjacent guards $(a,b)$ are
	checked in the order $a$ then $b$ rather than $b$ then $a$.
	The difference in the corresponding expected verification costs is
	\[
	v_a + (1-p_a)v_b - \bigl[\,v_b + (1-p_b) v_a\bigr]
	\;=\; p_b v_a - p_a v_b.
	\]
	Hence checking $a$ before $b$ is no worse than the opposite iff
	$p_a v_b \ge p_b v_a$, i.e.\ $p_a/v_a \ge p_b/v_b$. Transposing adjacent
	inversions of \eqref{eq:smith} never increases the expected cost, so any
	guardwise-sorted order is optimal. 
\end{proof}

The ratio rule is classical: it is Smith's rule for minimizing the weighted sum of completion times \cite{smith1956various}, and the same exchange argument yields optimal orders in sequential testing \cite{simon1975optimal}. We restate it because it applies verbatim to structural guards.

\begin{remark}[Special cases]
	\label{rem:special}
	The rule in \eqref{eq:smith} subsumes two natural heuristics:
	(i) if all guards are equally likely ($p_i=p$), it reduces to checking the
	cheapest guard first, i.e.\ increasing $v_i$;
	(ii) if all guards cost the same ($v_i = v$), it reduces to checking the
	most {\em likely} guard first, i.e.\ decreasing $p_i$.
	The general rule balances the two: a guard should be checked early either if
	it is very likely to match ($p_i$ large) or if it is cheap to verify
	($v_i$ small).
\end{remark}

\begin{remark}[Link to the uncertainty-reduction bound]
	\label{rem:cost}
	The ordering is tied to Theorem~\ref{thm:uncertainty}. The solver cost
	$c_i$ is bounded by the location, reading and reduced-problem terms of \eqref{eq:urd-constructive},
	and the guard cost is
	$v_i=C_{\mathrm{verify}}(K_i)$. Thus the ordering
	\eqref{eq:smith} minimizes the expected total
	\[
	\mathbb{E}[C]\;=\;
	\mathbb{E}\Bigl[\textstyle\sum_{j\le j^{\ast}} v_{\pi(j)}+
	\underbrace{c_{\pi(j^{\ast})}}_{\text{solver}}
	\Bigr]\,,
	\qquad j^{\ast}=\min\{j:x\models K_{\pi(j)}\},
	\]
	and can be computed either offline from domain knowledge or online by
	bootstrapping estimates $\hat p_i$ from observed match frequencies.
	This is the precise sense in which ``checking cheaper guards first'' is
	optimal: it removes the most uncertainty about which fast solver applies per
	unit of verification cost.
\end{remark}

\begin{remark}[Overlapping properties and unknown rates]
	\label{rem:overlap}
	When the $K_i$ are not disjoint (e.g.\ an all-equal array is also sorted), the order also affects which solver runs, and the expected verification cost becomes $\sum_{j} v_{\pi(j)}\Pr\bigl[x\not\models K_{\pi(1)},\dots,x\not\models K_{\pi(j-1)}\bigr]$. Minimizing it is an instance of the (weighted) \emph{min-sum set cover}, or \emph{pipelined set cover}, problem, which is NP-hard; the greedy rule that repeatedly selects the guard with the largest conditional probability of a first match per unit cost is a $4$-approximation \cite{feige2004minsum,munagala2005pipelined}, and it reduces to the ratio rule \eqref{eq:smith} for disjoint guards. When one guard implies another ($K_i\Rightarrow K_j$) and its solver is faster, checking $K_i$ first speeds up its matches but adds its cost to the other inputs; the trade-off is captured by the same expression. When the $p_i$ are
	unknown, they can be estimated online by
	$\hat p_i(t) = M_i(t)/t$, where $M_i(t)$ is the number of matches of $K_i$
	among the first $t$ requests; the ratio order is then maintained dynamically
	in a priority queue keyed by $\hat p_i/v_i$.
\end{remark}

\begin{remark}[Comparison with classifier cascades]
	\label{rem:cascade}
	The problem is structurally similar to ordering the stages of a classifier cascade or of a filter pipeline, where the same cost/probability trade-off appears \cite{munagala2005pipelined}; the principle holds independently of the semantics of the $K_i$'s.
\end{remark}

\begin{remark}[Guarantees under unknown verification cost]
	\label{rem:uncertain-v}
	The analysis assumes the $v_i$'s are known exactly. Suppose instead that only estimates $\hat v_i$ with $v_i\le\hat v_i\le\rho\,v_i$ are available, for some $\rho\ge1$. Ordering by $p_i/\hat v_i$ then yields an expected verification cost at most $\rho$ times the optimum. Indeed, write $\mathrm{cost}_v(\pi)$ for the expected verification cost of order $\pi$ under costs $v$; it is linear and monotone in $v$. If $\hat\pi$ is optimal for $\hat v$ and $\pi^\ast$ for $v$, then $\mathrm{cost}_v(\hat\pi)\le\mathrm{cost}_{\hat v}(\hat\pi)\le\mathrm{cost}_{\hat v}(\pi^\ast)\le\rho\,\mathrm{cost}_v(\pi^\ast)$.
\end{remark}

\noindent
{\bf On the value of the proposed conditional dispatch.}
A reader might object that the software framework reduces to an ``if-then-else'': if $x\models K_i$ run $\mathrm{alg}(K_i)$. At the level of control flow it is indeed a conditional dispatcher, and a special case of algorithm selection \cite{rice1976algorithm}. What the framework adds is a \emph{contract} attached to each branch: the guard has an explicit verification bound $C_{\mathrm{verify}}(K_i)$, the branch solver has the explicit bound of Theorem~\ref{thm:uncertainty}, and the order of the guards is optimal for disjoint guards (ratio rule) and approximately optimal otherwise (Remark~\ref{rem:overlap}). An unconstrained if-then-else exposes none of these quantities. A summary is given in Table~\ref{tab:ifelse}.

\begin{table}[htbp]
	\centering
	\caption{Dispatch on structural predicates vs.\ an unconstrained conditional.
		The framework adds an explicit {\em complexity contract} to every branch of an
		if-then-else.}
	\label{tab:ifelse}
	\small
	\begin{tabular}{@{}p{3.1cm} p{5.2cm} p{5.2cm}@{}}
		\toprule
		\textbf{Dimension}
		& \textbf{Plain if-then-else} (``if $x$ satisfies $K$ run \textsc{Alg}$_x$'')
		& \textbf{$\UP[K]$ parametric dispatch} \\
		\midrule
		Guard test cost
		& unspecified; may be as costly as the problem itself
		& bounded by the verification cost
		$C_{\mathrm{verify}}(K)$, separated and made explicit \\
		\addlinespace
		Branch (solver) cost
		& not stated
		& bounded by Theorem~\ref{thm:uncertainty}:
		$C_{\mathrm{find}}+O(\Psi)+T_{\mathcal{B}}$; hence $P\in\UP[K]$ \\
		\addlinespace
		Ordering of guards
		& ad hoc; depends on how the programmer writes the {\tt if}s
		& ratio rule $p_i/v_i$, optimal for disjoint guards
		(Section~\ref{sec:GuardOrdering}); greedy $4$-approximation otherwise \\
		\addlinespace
		Composition / conflicts
		& overlapping conditions resolved arbitrarily
		& overlaps handled by the min-sum set cover formulation (Remark~\ref{rem:overlap}) \\
		\addlinespace
		Complexity certificate
		& none exposed to the caller
		& a {\em contract}: guard bound and solver bound
		are known to the API and to the SLA layer \\
		\bottomrule
	\end{tabular}
\end{table}

Finally, if the SLA cannot be met even after optimal guard ordering,  because
no structural property $K$ applies to the instance, or because the best match
still violates the latency bound, the dispatcher can of course resort to an
approximation algorithm as the ultimate fallback: on inputs $x \not\models K$
it could compute an $\alpha$-approximate solution in polynomial time, thereby
trading exactness for a guaranteed latency bound.

\subsection{\texorpdfstring{Application in Agentic AI and Planning}{Application in Agentic AI and Planning}}

The rapid emergence of \emph{Agentic AI}, where autonomous agents based on large language models (LLMs) reason, plan, invoke tools and coordinate, raises questions that the framework can help phrase precisely: when can a sub-task be delegated to an exact solver with a guaranteed running time, and when must it be handled heuristically? We sketch three uses. They are design principles rather than theorems, and their value depends on how faithfully a task can be mapped to a formal instance.

\subsubsection{Planning Under Structural Restrictions}

Plan existence for propositional STRIPS planning is PSPACE-complete \cite{bylander1994computational}, and in general an agent's planner must rely on heuristic search. A large body of work studies which restrictions on the \emph{causal graph} of a planning domain make planning tractable, and the answers are subtle. Acyclicity of the causal graph alone is \emph{not} sufficient. Plan existence is NP-complete for binary variables and polytree causal graphs, and NP-hard for multi-valued variables and chain causal graphs, while minimal plans can be exponentially long even for acyclic causal graphs \cite{gimenez2008complexity}. Plan existence is NP-complete for singly connected causal graphs, while plan generation is polynomial for binary variables and polytree causal graphs of bounded indegree \cite{brafman2003structure}. More generally, planning is tractable when the weakly connected components of the causal graph have bounded size, and intractable otherwise under a standard parameterized-complexity assumption \cite{chen2010causal}.

In the vocabulary of this paper each tractable case is a property $K$ with $\text{Planning}\in\UP[K]$, for example $K_{\mathrm{poly},d}$: ``binary variables, unary operators, and a polytree causal graph of indegree at most $d$''. Such properties can be checked in time polynomial (typically linear) in the size of the domain description. An agent can therefore run the guard before planning and, if it succeeds, invoke a planner whose running time is certified by the contract; otherwise it falls back to heuristic search.

\subsubsection{Cost-Aware Tool Selection}

A central challenge in agentic architectures is deciding when to invoke an external tool (such as an API, a database query, or a symbolic solver) and which one. A natural heuristic, in the spirit of the guard-ordering rule, scores each tool $T\in\mathcal{T}$ by the expected reduction of residual uncertainty per unit cost,
\[
\mathcal{U}_T=\frac{\Psi(n,K_s)-\Psi(n,K_{s'})}{\text{Cost}(T)},
\]
where $K_s$ and $K_{s'}$ describe what is known about the task instance before and after invoking $T$, and selects $T^\ast=\arg\max_{T}\mathcal{U}_T$. This is a value-of-information ratio \cite{howard1966information} with $\Psi$ as the uncertainty measure. Greedy selection by such a ratio is a heuristic: it is near-optimal under adaptive submodularity \cite{golovin2011adaptive}, but not in general. Applying $\Psi$ in this way also requires that the agent's knowledge state be represented as a property of a concrete input encoding.

\subsubsection{Routing Between Learned and Symbolic Solvers}

Modern AI systems combine fast learned generation (sometimes called ``System 1'') with slower deliberate or symbolic computation (``System 2''). The framework suggests a simple routing rule. Let a request be formalized as an instance $x$ of size $n$ of a problem $P$. An orchestrator runs polynomial-time guards for properties $K$ with $P\in\UP[K]$:
\begin{enumerate}
	\item If some guard succeeds, the request is routed to the corresponding exact symbolic solver, which returns a correct answer within the certified time bound, without relying on the LLM for the computation itself.
	\item Otherwise the request is handled by the LLM (or by a heuristic or approximation algorithm), without such guarantees.
\end{enumerate}
The guarantee in the first case is relative to the formalization step: it presupposes that $x$ faithfully represents the user's request, which in agentic systems is often the weakest link.

\subsection{\texorpdfstring{Quantum Computing: Promises and Speedups}{Quantum Computing: Promises and Speedups}}

The framework also offers a vocabulary for a well-known fact of quantum computing: large quantum speedups require structure. We first extend the parametric classes to quantum computation.

\begin{definition} (Uncertainty-Parametric BQP - $\UBQP[K]$)
	For a property $K$ decidable in classical polynomial time, a language $L \subseteq \Sigma^*$ belongs to the class $\UBQP[K]$ if there exist a polynomial-time uniform family of quantum circuits $M_Q$ and a polynomial $p$ such that, for every string $x$ satisfying $K$, $M_Q$ decides whether $x \in L$ with error probability at most $1/3$ in time $p(|x|)$ \cite{bernstein1997quantum,nielsen2000qic}.
\end{definition}

As in the classical case, $\UBQP[\epsilon]=BQP$, and $L\in\UBQP[K]$ iff $L\cap K\in BQP$.

\subsubsection{Unstructured Search and Grover's Bound}

Consider search in an unstructured database of $N=2^n$ items accessed through queries. Without knowledge ($K=\epsilon$) every item may be the target, so $\Psi(N,\epsilon)=N$. Classically $\Theta(N)$ queries are necessary. Bennett, Bernstein, Brassard and Vazirani \cite{bennett1997} proved that every quantum algorithm needs $\Omega(\sqrt{N})$ queries, and Grover's algorithm \cite{grover1996fast} attains this bound. The speedup is quadratic, and the cost remains exponential in $n=\log N$ when the database is given implicitly (e.g.\ by a circuit); this is one reason why Grover search is slow in practice on large data \cite{theodorakis2025encoding}. The same lower bound is the main oracle evidence for the widely believed conjecture $NP\not\subseteq BQP$.

\subsubsection{Promises and Exponential Speedups}
\label{sec:QuantumExponential}

In the query model the role of properties $K$ is precise. For \emph{total} Boolean functions, i.e.\ when $K=\epsilon$, deterministic and quantum query complexities are polynomially related, $D(f)=O(Q(f)^6)$ \cite{beals2001quantum}. Super-polynomial quantum speedups in query complexity therefore exist \emph{only} for partial functions, i.e.\ under a promise $K$ on the input. Simon's problem is the prototypical example: under the promise that $f:\{0,1\}^n\to\{0,1\}^n$ satisfies $f(x)=f(y)$ iff $y\in\{x,x\oplus s\}$ for a hidden $s$, quantum algorithms need $O(n)$ queries, whereas classical ones need exponentially many. In our vocabulary, exponential query speedups live in $\UBQP[K]$ for nontrivial promises $K$, never at $K=\epsilon$.

Shor's algorithm \cite{shor1999polynomial} illustrates a different, problem-level kind of structure. Factoring\footnote{Which is a hard problem, but not known to be NP-complete.} reduces to finding the period of $f(x)=a^x \bmod N$, and \emph{every} such function is periodic. Periodicity is therefore not a restriction on the inputs $N$, but an algebraic property of the problem, exploited through the quantum Fourier transform \cite{NielsenChuang2010} and, more generally, through the abelian Hidden Subgroup Problem \cite{Lomont2004,ChildsVanDam2010}. Accordingly, factoring is in $\UBQP[\epsilon]=BQP$, not merely in $\UBQP[K]$ for a restrictive $K$. The HSP framework does not capture all known exponential quantum speedups: quantum simulation \cite{Lloyd1996} and quantum linear-systems algorithms \cite{HHL2009} achieve advantages through different mechanisms. The emerging picture is that exponential quantum advantage requires structure, either a promise on the inputs (Simon) or algebraic structure of the problem itself (Shor), whereas the classical collapses studied in this paper exploit structure of individual instances.

\subsubsection{Limits of Quantum Advantage on Structured Inputs}
\label{sec:QuantumStructured}

The converse observation concerns heuristics such as the quantum approximate optimization algorithm (QAOA) \cite{farhi2014qaoa}, whose canonical benchmark is MaxCut. Under $K_{\mathrm{bipartite}}$ every edge crosses the bipartition, so the maximum cut equals $|E|$ and is obtained in $O(|V|+|E|)$ time by the guard itself; under $K_{\mathrm{planar}}$ MaxCut is solvable in polynomial time \cite{hadlock1975maxcut}, and planarity can be tested in linear time \cite{hopcroft1974planarity}. In both cases no quantum algorithm is needed, and no exponential quantum advantage is possible, since the problem is already classically polynomial on these classes. (Here the residual uncertainty $\Psi$ is not small; the collapse is structural.) Quantum heuristics for MaxCut are therefore of interest only on instances outside the known tractable classes, which suggests evaluating them on instances that fail all known guards, rather than on structured benchmarks that a classical guard would solve exactly.

\section{Conclusion}
\label{sec:CR}

This paper introduced a framework for stating complexity bounds that depend explicitly on checkable knowledge about the input. Costs are measured for fixed algorithms, which avoids the degeneracy of per-instance infima; the restricted worst-case cost $W(n,K)$, its promise and guarded versions, and the residual uncertainty $\Psi(n,K)$ make the price of exploiting a property $K$ explicit. The Uncertainty Reduction Theorem decomposes the guarded cost into the costs of checking $K$, of locating and reading the residual information, and of solving an explicitly defined reduced problem; when the reduced problem maps back to a small instance of the original problem, this reduction plays the role of a kernelization.

On the structural side, we identified the classes $\UP[K]$ and $\UNP[K]$ with restrictions to polynomial-time promises, and introduced coverage as the quantity that makes membership statements informative. We showed that small residual uncertainty forces sparsity (and hence, by Mahaney's theorem, non-hardness unless $P=NP$) without implying tractability (unless $E=NE$); characterized FPT and XP as uniform membership in graded families of properties; and established a free-variable threshold for SAT: polynomial for $O(\log n)$ free variables, not polynomial for $n^{\gamma}$ free variables unless $P=NP$, and requiring $2^{\Omega(n^{\gamma})}$ time under ETH. We also showed that knowledge resistance must be formulated through coverage: many checkable properties make factoring easy, but under the factoring assumption none of them has non-negligible mass on balanced semiprimes.

We illustrated three distinct mechanisms of tractability (reduction of residual uncertainty, FPT-style interface parameterization, and structure that fixes the optimal solution) on Partition, Graph Isomorphism, SAT with backdoors, Vertex Cover of bounded treewidth, and TSP on Supnick, Kalmanson and Demidenko matrices. On the practical side, the same accounting yields guarded dispatch with explicit latency contracts, an optimal guard order for disjoint guards (Smith's ratio rule) with a constant-factor approximation for overlapping ones, guarded routing for agentic systems, and a reading of quantum speedups in terms of promises.

As future work, it is worth building a software library and catalogue of $\mathrm{NP}$-hard problems and their structural collapses. Each entry would record, for a given problem $P$: a family of structural properties $\{K_i\}$ for which $P \in \UP[K_i]$, their coverage on representative workloads, the verification cost $C_{\mathrm{verify}}(K_i)$, the find cost $C_{\mathrm{find}}(n,K_i)$, the reduced-problem cost of Theorem~\ref{thm:uncertainty}, and the guard order. Given an $\mathrm{NP}$-hard task and a known instance structure, an engineer (or an AI agent) could then query the certified collapse and the price of realizing it, rather than resorting to worst-case analysis. Starting from the problems treated in this paper (Partition, SAT, Vertex Cover, and TSP), the catalogue can grow incrementally as further structural properties are identified, together with natural families $\mathcal{K}$ for which $\Psi(n,K)$ and $\mathrm{USI}(P,\mathcal{K})$ are computable or effectively approximable.

A second direction is a hardness hierarchy for the framework: which properties $K$ are hard to \emph{locate} even when the residual uncertainty $\Psi(n,K)$ is small? A $W$-analogue classifying the find cost $C_{\mathrm{find}}$ would determine when the collapse promised by the Uncertainty Reduction Theorem is genuinely exploitable. Third, for instances that satisfy no tractable property, approximation algorithms are the natural fallback, and the uncertainty metrics may guide the choice of the approximation ratio. Finally, in the quantum setting it would be interesting to identify natural promises $K$ on NP-hard problems under which quantum algorithms provably help, or under which $\UNP[K]\not\subseteq\UBQP[K]$ can be shown relative to oracles or standard conjectures.




\subsection*{Acknowledgments}
This paper assembles personal ideas and work of many summers.
%
During this long period  I enjoyed related discussions with Nicolas Spyratos.
Many thanks to Michalis Kolountzakis for spotting problems that enabled me to correct the theory,
and to Charalampos E. Tsourakakis for suggesting me to establish the relationship to average complexity,
and other feedback.
Many thanks also to Anastasia Analyti,  Dimitris Plexousakis and  Grigorios Tsagkatakis for providing me with valuable feedback.
Finally I would like to thank Marios Adamidis for giving me access to the FORTHought \cite{Adamidis2026setn}
that helped me assemble my notes and ideas
in a coherent text. 
This revised version was prepared with the assistance of the AI model Claude (Anthropic), which was used to review the previous version and to draft and edit parts of the text, definitions and proofs.

\bibliographystyle{plain}
\bibliography{paper,newrefs} 

\appendix

\section{Implementation Results}
\label{sec:ImplResults}

The code is publicly available at 
\url{https://github.com/YannisTzitzikas/PNPu}.

\subsection{SAT}

\begin{Verbatim}[frame=lines,breaklines=true,fontsize=\scriptsize]
================================================================
DESIGN-BY-CONTRACT & PARAMETRIC COMPLEXITY (P_u[K]) DEMONSTRATION
-- SAT problem --
================================================================

--> Test 1: Processing 2-CNF with N = 2,000,000 clauses (K_2CNF)...
- SAT Result        : SATISFIABLE
- #Variables        : 5,000
- #Clauses          : 2,000,000
- Matched Property K: K_2CNF (syntactic 2-SAT)
- Guard Verify Time : 34.210 ms (C_verify(K) linear scan)
- Solver Exec Time  : 1787.842400 ms (P_u[K] polynomial collapse)
- Total Time        : 1822.055 ms
- SLA Status        : PASSED [SLA MET]

--> Test 2: Processing Horn formula with N = 2,000,000 clauses (K_HORN)...
- SAT Result        : SATISFIABLE
- #Variables        : 5,000
- #Clauses          : 2,000,000
- Matched Property K: K_HORN (Horn-SAT)
- Guard Verify Time : 21.788 ms (C_verify(K) linear scan)
- Solver Exec Time  : 163.461400 ms (P_u[K] polynomial collapse)
- Total Time        : 185.250 ms
- SLA Status        : PASSED [SLA MET]

--> Test 3: Testing small unstructured 3-SAT (K = \epsilon)...
- SAT Result        : SATISFIABLE
- #Variables        : 3
- #Clauses          : 3
- Matched Property K: K_UNSTRUCTURED (Fallback DPLL)
- Guard Verify Time : 0.029 ms (C_verify(K) linear scan)
- Solver Exec Time  : 0.033200 ms (P_u[K] polynomial collapse)
- Total Time        : 0.064 ms
- SLA Status        : PASSED [SLA MET]

--> Test 4: Attempting large unstructured 3-SAT (N = 500, K = \epsilon)...
[CONTRACT REJECTION TRIGGERED]
Exception Message: Design-by-Contract Violation: Formula lacks structural property K (K=\epsilon) with 600 variables and 500 clauses exceeds deterministic SLA execution bounds.
SLA Status       : PROTECTED (System prevented unbounded CPU execution!)

================================================================
SUMMARY OF PRACTICAL VALUE:
1. Verifying property K in O(m) collapses SAT solving to O(m) in P_u[K].
2. Two million clauses (2-CNF and Horn) solved in well under the 2500 ms SLA.
3. Unstructured 3-SAT with small input passes.
4. Unstructured 3-SAT inputs are rejected upfront, guaranteeing deterministic latency.
================================================================

\end{Verbatim}

\subsection{Vertex Cover}

\begin{Verbatim}[frame=lines,breaklines=true,fontsize=\scriptsize]
================================================================
DESIGN-BY-CONTRACT & PARAMETRIC COMPLEXITY (P_u[K]) DEMONSTRATION
-- Vertex Cover problem --
================================================================

--> Test 1: Processing a TREE with V = 500,000 vertices (K_TREE)...
- Min Vertex Cover   : 250000
- #Vertices          : 500,000
- #Edges             : 499,999
- Matched Property K : K_TREE (linear DP)
- Guard Verify Time  : 59.233 ms (C_verify(K) linear scan)
- Solver Exec Time   : 23.266800 ms (UP[K] collapse)
- Total Time         : 82.503 ms
- SLA Status         : PASSED [SLA MET]

--> Test 2: Processing a BIPARTITE graph with V = 40,000 vertices (even cycle, K_BIPARTITE)...
- Min Vertex Cover   : 20000
- #Vertices          : 40,000
- #Edges             : 40,000
- Matched Property K : K_BIPARTITE (Konig)
- Guard Verify Time  : 4.727 ms (C_verify(K) linear scan)
- Solver Exec Time   : 7.038800 ms (UP[K] collapse)
- Total Time         : 11.766 ms
- SLA Status         : PASSED [SLA MET]

--> Test 3: Testing small unstructured non-bipartite graph (K = \epsilon, triangle)...
- Min Vertex Cover   : 2
- #Vertices          : 3
- #Edges             : 3
- Matched Property K : K_UNSTRUCTURED (brute force)
- Guard Verify Time  : 0.035 ms (C_verify(K) linear scan)
- Solver Exec Time   : 0.076900 ms (UP[K] collapse)
- Total Time         : 0.112 ms
- SLA Status         : PASSED [SLA MET]

--> Test 4: Attempting large non-bipartite graph (V = 200, K = \epsilon)...
[CONTRACT REJECTION TRIGGERED]
Exception Message: Design-by-Contract Violation: Non-bipartite graph with V=200 lacks structural property K (K=\epsilon) and exceeds deterministic SLA execution bounds.
SLA Status       : PROTECTED (System prevented unbounded CPU execution!)

================================================================
SUMMARY OF PRACTICAL VALUE:
1. Verifying property K in O(V+E) collapses NP-hard Vertex Cover to polynomial.
2. A 500,000-vertex tree and a 40,000-vertex bipartite graph are solved exactly,
whereas brute force would need 2^V time.
3. Testing small unstructured non-bipartite graph.
4. Unstructured non-bipartite inputs are rejected upfront, guaranteeing deterministic latency.
================================================================

\end{Verbatim}

\subsection{TSP}

Notice that here the verification cost (in Test~1) is higher than the solver cost; checking the Monge property of an $n\times n$ matrix takes $O(n^2)$ time, which is linear in the input size. The test matrix $d[i][j]=(i-j)^2$ is symmetric Monge (a Supnick matrix), so the solver outputs Supnick's tour, which is optimal (\S\ref{sec:TSPwithMonge}); for $n=200$ it has cost $794$, whereas the identity tour would cost $39800$. The identity tour is optimal for Kalmanson matrices, but not in general for Monge, Supnick or Demidenko matrices.

\begin{Verbatim}[frame=lines,breaklines=true,fontsize=\scriptsize]
================================================================
 DESIGN-BY-CONTRACT & PARAMETRIC COMPLEXITY (P_u[K]) DEMONSTRATION
                                        -- TSP problem --
================================================================

--> Test 1: Processing symmetric MONGE (Supnick) matrix, n = 200 (K_MONGE)...
  - TSP Cost          : 794
  - #Cities           : 200
  - Matched Property K: K_SUPNICK (symmetric Monge, Supnick tour optimal)
  - Guard Verify Time : 2.869 ms (C_verify(K))
  - Solver Exec Time  : 0.012567 ms (UP[K] collapse)
  - Total Time        : 2.882 ms
  - SLA Status        : PASSED [SLA MET]

--> Test 2: Processing small random matrix, n = 10 (K_UNSTRUCTURED, Held-Karp)...
  - TSP Cost          : 1943
  - #Cities           : 10
  - Matched Property K: K_UNSTRUCTURED (held-karp)
  - Guard Verify Time : 0.005 ms (C_verify(K))
  - Solver Exec Time  : 1.349159 ms (UP[K] collapse)
  - Total Time        : 1.354 ms
  - SLA Status        : PASSED [SLA MET]

--> Test 3: Attempting large random matrix, n = 30 (K = \epsilon)...
  [CONTRACT REJECTION TRIGGERED]
  Exception Message: Design-by-Contract Violation: TSP instance (n=30) lacks a structural property K (Monge/rank-1) and exceeds deterministic SLA execution bounds.
  SLA Status       : PROTECTED (System prevented unbounded CPU execution!)

================================================================
 SUMMARY OF PRACTICAL VALUE:
 1. A symmetric Monge (Supnick) matrix collapses TSP to O(n) (Supnick tour).
 2. Unstructured instances small, meet the SLA.
 3. Unstructured instances rejected upfront to guarantee deterministic latency.
================================================================
\end{Verbatim}

\end{document}